\documentclass[format=acmsmall, review=false, screen=true]{acmart}

\usepackage{booktabs} 
\usepackage{pdflscape} 
\usepackage{afterpage} 

\usepackage[ruled]{algorithm2e} 

\SetAlFnt{\small}
\SetAlCapFnt{\small}
\SetAlCapNameFnt{\small}
\SetAlCapHSkip{0pt}
\IncMargin{-\parindent}



 

\settopmatter{printacmref=false} 
\renewcommand\footnotetextcopyrightpermission[1]{} 
\pagestyle{plain} 

\usepackage{inputenc}
\usepackage{balance}
\usepackage{xspace}
\usepackage{rotating}
\usepackage{url}
\usepackage{paralist}
\usepackage{fixltx2e}
\usepackage{multirow}
\usepackage{amsmath,amssymb,amsthm}
\usepackage{xfrac}
\usepackage{array}
\usepackage{textcomp}
\usepackage{siunitx}
\sisetup{detect-all}
\usepackage{pgfplots}
\usepackage{xcolor}
\pgfplotsset{compat=newest,unit code/.code={\si{#1}},plot coordinates/math parser=false,grid style={lightgray}}
\usepgfplotslibrary{units,external,groupplots}
\usetikzlibrary{positioning,angles,quotes,patterns}

\newtheorem{theo}{Theorem}
\newtheorem{lem}{Lemma}
\newtheorem{defi}{Definition}

\DeclareMathOperator*{\E}{\mathbb{E}}
\DeclareMathOperator*{\Var}{Var}

\DeclareMathOperator*{\R}{\mathbb{R}}

\newcommand\figref[1]{Fig.~\ref{#1}}
\newcommand\tabref[1]{Table~\ref{#1}}
\newcommand\secref[1]{Section~\ref{#1}}

\newcommand{\etal}{et~al.\xspace}
\newcommand{\eg}{\emph{e.g.},\xspace}
\newcommand{\ie}{\emph{i.e.},\xspace}
\newcommand{\cf}{cf.\xspace}
\newcommand{\etc}{etc.\xspace}
\newcommand{\capt}[1]{\mdseries{\emph{#1}}}

\newcommand{\iid}{i.i.d.\xspace}

\usepackage{ifthen}
\newboolean{authnotes}

\setboolean{authnotes}{false}

\ifthenelse{\boolean{authnotes}}
{
\newcommand{\fm}[1]{\footnote{{\bf\color{blue} Fabian: #1}}}
\newcommand{\mz}[1]{\footnote{{\bf\color{blue} Marco: #1}}}
\newcommand{\db}[1]{\footnote{{\bf\color{green!50!black} Dominik: #1}}} 
\newcommand{\st}[1]{\footnote{{\bf\color{purple!90!black} Sebastian: #1}}}
\newcommand{\rj}[1]{\footnote{{\bf\color{orange!50!black} Romain: #1}}}

}
{
\newcommand{\fm}[1]{}
\newcommand{\mz}[1]{}
\newcommand{\db}[1]{}
\newcommand{\st}[1]{}
\newcommand{\rj}[1]{}

}




\setlength{\rotFPtop}{0pt plus 1fil}
\setlength{\rotFPbot}{0pt plus 1fil}

\graphicspath{{./img/}}
\usepackage{caption}
\usepackage{subcaption}

\DeclareSIUnit\dBm{dBm}
\newcommand{\meter}{\ensuremath{\,\text{m}}\xspace}
\newcommand{\s}{\ensuremath{\,\text{s}}\xspace}
\newcommand{\ms}{\ensuremath{\,\text{ms}}\xspace}
\newcommand{\us}{\ensuremath{\,\mu\text{s}}\xspace}

\newcommand{\MHz}{\ensuremath{\,\text{MHz}}\xspace}

\newcommand{\dBm}{\ensuremath{\,\text{dBm}}\xspace}
\newcommand{\ppm}{\ensuremath{\,\text{ppm}}\xspace}
\newcommand{\kbps}{\ensuremath{\,\text{kbit/s}}\xspace}

\newcommand{\percent}{\ensuremath{\,\text{\%}}\xspace}

\newcommand{\V}{\ensuremath{~\text{V}}\xspace}

\newcommand{\fakepar}[1]{\vspace{1mm}\noindent\textbf{#1.}}

\newcommand{\cps}{CPS\xspace}

\newcommand{\ttw}{TTW\xspace}

\newcommand{\ap}{AP\xspace}
\newcommand{\cp}{CP\xspace}
\newcommand{\dpp}{DPP\xspace}
\newcommand{\bolt}{Bolt\xspace}
\newcommand{\lti}{LTI\xspace}

\newcommand{\sync}{SYNC\xspace}

\newcommand{\eREF}{\ensuremath{e_{\mathit{ref}}}\xspace}
\newcommand{\eREFmax}{\ensuremath{\hat{e}_{\mathit{ref}}}\xspace}

\newcommand{\eSYNCmax}{\ensuremath{\hat{e}_{\mathit{SYNC}}}\xspace}

\newcommand{\eTASKmax}{\ensuremath{\hat{e}_{\mathit{task}}}\xspace}

\newcommand{\rhoAP}{\ensuremath{\rho_{\mathit{\ap}}}\xspace}
\newcommand{\rhoAPmax}{\ensuremath{\hat{\rho}_{\mathit{\ap}}}\xspace}
\newcommand{\rhoCP}{\ensuremath{\rho_{\mathit{\cp}}}\xspace} 
\newcommand{\rhoCPmax}{\ensuremath{\hat{\rho}_{\mathit{\cp}}}\xspace}

\newcommand{\trefEstim}{\ensuremath{\hat{t}_{\mathit{ref}}}\xspace}

\newcommand{\fAP}{\ensuremath{f_{\mathit{\ap}}}\xspace}

\newcommand{\Tjitter}{\ensuremath{\widetilde{T}_{\mathit{end}}}\xspace}
\newcommand{\Jitter}{\ensuremath{J}\xspace}

\newcommand{\Tupdate}{\ensuremath{T_U}\xspace}
\newcommand{\Tdelay}{\ensuremath{T_D}\xspace}

\newcommand{\apnode}[1]{AP\textsubscript{#1}}
\newcommand{\cpnode}[1]{CP\textsubscript{#1}}


\usepackage{fancyhdr}
\newcommand{\mytitle}{\textbf{Accepted final version.}
To appear in \textit{ACM Transactions on Cyber-Physical Systems}.\\
\copyright 2019 Copyright is with the authors, exclusive licensing agreement with ACM.}
\fancyhf{}		
\fancyfoot[L]{\normalfont \sffamily  \scriptsize \mytitle}		
\addtolength{\footskip}{-10pt}    


\begin{document}
\title[Fast Feedback Control over Multi-hop Wireless Networks]{Fast Feedback Control over Multi-hop Wireless Networks with Mode Changes and Stability Guarantees}

\author{Dominik Baumann}
\authornote{Equal contribution}
\affiliation{%
  \institution{Max Planck Institute for Intelligent Systems}
  \streetaddress{Max-Planck-Ring 4}
  \postcode{72076}
  \city{T{\"u}bingen}
  \country{Germany}}
\email{dominik.baumann@tuebingen.mpg.de}
\author{Fabian Mager}
\authornotemark[1]
\affiliation{%
  \institution{TU Dresden}
  \streetaddress{Helmholtzstra{\ss}e 18}
  \postcode{01062}
  \city{Dresden}
  \country{Germany}
}
\email{fabian.mager@tu-dresden.de}
\author{Romain Jacob}
\affiliation{%
  \institution{ETH Zurich}
  \streetaddress{Gloriastrasse 35}
  \postcode{8092}
  \city{Zurich}
  \country{Switzerland}
  }
\email{romain.jacob@tik.ee.ethz.ch}
\author{Lothar Thiele}
\affiliation{%
  \institution{ETH Zurich}
  \streetaddress{Gloriastrasse 35}
  \postcode{8092}
  \city{Zurich}
  \country{Switzerland}
}
\email{thiele@ethz.ch}
\author{Marco Zimmerling}
\affiliation{%
  \institution{TU Dresden}
  \streetaddress{Helmholtzstra{\ss}e 18}
  \postcode{01062}
  \city{Dresden}
  \country{Germany}
  }
\email{marco.zimmerling@tu-dresden.de}
\author{Sebastian Trimpe}
\affiliation{%
  \institution{Max Planck Institute for Intelligent Systems}
  \streetaddress{Heisenbergstrasse 3}
  \postcode{70569}
  \city{Stuttgart}
  \country{Germany}}
  \email{trimpe@is.mpg.de}


\begin{abstract}
Closing feedback loops fast and over long distances is key to emerging cyber-physical applications; for example, robot motion control and swarm coordination require update intervals of tens of milliseconds.
Low-power wireless communication technology is preferred for its low cost, small form factor, and flexibility, especially if the devices support multi-hop communication.
Thus far, however, feedback control over multi-hop low-power wireless networks has only been demonstrated for update intervals on the order of seconds.
To fill this gap, this paper presents a wireless embedded system that supports dynamic mode changes and tames imperfections impairing control performance (\eg jitter and message loss), and a control design that exploits the essential properties of this system to provably guarantee closed-loop stability for physical processes with linear time-invariant dynamics in the presence of mode changes.
Using experiments on a cyber-physical testbed with 20 wireless devices and multiple cart-pole systems, we are the first to demonstrate and evaluate feedback control and coordination with mode changes over multi-hop networks for update intervals of 20 to 50 milliseconds.
\end{abstract}

\begin{CCSXML}
<ccs2012>
<concept>
<concept_id>10010520.10010553.10010559</concept_id>
<concept_desc>Computer systems organization~Sensors and actuators</concept_desc>
<concept_significance>500</concept_significance>
</concept>
<concept>
<concept_id>10010520.10010553.10010562</concept_id>
<concept_desc>Computer systems organization~Embedded systems</concept_desc>
<concept_significance>300</concept_significance>
</concept>
<concept>
<concept_id>10010520.10010570.10010574</concept_id>
<concept_desc>Computer systems organization~Real-time system architecture</concept_desc>
<concept_significance>300</concept_significance>
</concept>
<concept>
<concept_id>10010520.10010575</concept_id>
<concept_desc>Computer systems organization~Dependable and fault-tolerant systems and networks</concept_desc>
<concept_significance>300</concept_significance>
</concept>
<concept>
<concept_id>10003033.10003106.10003112</concept_id>
<concept_desc>Networks~Cyber-physical networks</concept_desc>
<concept_significance>500</concept_significance>
</concept>
</ccs2012>
\end{CCSXML}

\ccsdesc[500]{Computer systems organization~Sensors and actuators}
\ccsdesc[300]{Computer systems organization~Embedded systems}
\ccsdesc[300]{Computer systems organization~Real-time system architecture}
\ccsdesc[300]{Computer systems organization~Dependable and fault-tolerant systems and networks}
\ccsdesc[500]{Networks~Cyber-physical networks}
\ccsdesc[300]{Networks~Network protocol design}

\keywords{Wireless control, Closed-loop stability, Multi-agent systems, Multi-hop networks, Synchronous transmissions, Mode changes, Cyber-physical systems, Industrial Internet of Things}

\maketitle

\thispagestyle{fancy}	
\pagestyle{empty}

\renewcommand{\shortauthors}{D. Baumann et al.}


\section{Introduction}
\label{sec:intro}

Cyber-physical systems~(\cps) rely on embedded computers and networks to monitor and control physical systems~\cite{Derler2012}.
While monitoring using \emph{sensors} allows, for example, to better understand environmental processes~\cite{Corke2010}, it is feedback control and coordination of possibly multiple physical systems through \emph{actuators} what nurtures the \cps vision of robotic materials~\cite{Correll2017}, smart transportation~\cite{Besselink2016}, multi-robot swarms for disaster response and manufacturing~\cite{Hayat2016}, \etc

A key hurdle to realizing the \cps vision is how to close the \emph{feedback loops} between sensors and actuators as these may be numerous, mobile, distributed across large physical spaces, and attached to devices subject to size, weight, and cost constraints.
Wireless multi-hop communication among low-power, possibly battery-supported nodes\footnote{While actuators often require wall power, low-power operation is crucial for sensors and controllers, which may run on batteries and harvest energy from various sources, such as solar in outdoor scenarios or machine vibrations in a factory~\cite{Akerberg2011}.} offers the cost efficiency and flexibility to overcome this hurdle~\cite{Lu2016a,Watteyne2016} if two requirements are met.
First, fast feedback is required to keep up with the dynamics of physical systems~\cite{Astrom1996}; for example, robot-motion control and drone-swarm coordination require update intervals of tens of milliseconds~\cite{Preiss2017,Abbenseth2017}.
Second, as feedback control modifies the dynamics of physical systems~\cite{Astrom2008}, guaranteeing \emph{closed-loop stability} in the presence of imperfect wireless communication is essential.
What is more, dynamic changes in the configuration or behavior of the application are a major concern.
For instance, a drone swarm may act differently during take-off, normal flight, and landing (enabling/disabling coordination, switching between different formations, \etc), or machines may be added or removed in an operational production plant. 
The ability to cater for such runtime adaptability in response to an event from the environment or from within the system, known as \emph{mode~changes}~\cite{Chen2018}, is an important requirement.

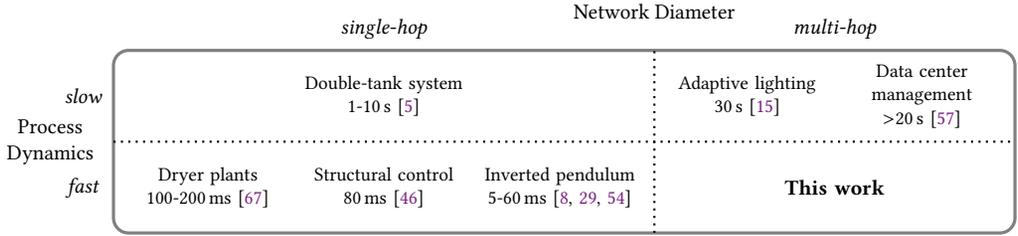
\begin{figure}[!tb]
\centering
\tikzsetnextfilename{cps_solutions}
\tikzexternalexportnextfalse
\begin{tikzpicture}[scale=0.8, every node/.style={scale=0.8}]]
\begin{scope}
\path[clip, preaction={draw, very thick, gray}] [rounded corners=2mm] (-9,-1.5) rectangle (6,1.5);
\draw[thick,dotted] (-9,0) -- (6,0);
\draw[thick,dotted] (0,-1.5) -- (0,1.5);
\end{scope}
\begin{scope}[text=black]
\node[align=center] at (-10.05,0) {Process\\Dynamics};
\node[align=right] at (-9.5,-0.75) {\textit{fast}};
\node[align=right] at (-9.5,0.75) {\textit{slow}};
\node[align=center] at (0,2.15) {Network Diameter};
\node[align=center] at (-4.5,1.85) {\textit{single-hop}};
\node[align=center] at (3,1.85) {\textit{multi-hop}};
\end{scope}
\begin{scope}[text width=9cm,text=black, align=center]
\node[font=\small] (n2) at (-4.5,-0.75) {
	\hspace*{\fill}
	\parbox{2.5cm}{\centering
		Dryer plants\\
		\SIrange[range-units=single, range-phrase=-]{100}{200}{\milli\second} \cite{Ye2001}}
	\hspace*{\fill}
	\parbox{2.5cm}{\centering
		Structural control\\
		\SI{80}{\milli\second} \cite{Lynch2007}}
	\hspace*{\fill}
	\parbox{2.5cm}{\centering
		Inverted pendulum\\
		\SIrange[range-units=single, range-phrase=-]{5}{60}{\milli\second} \cite{Ploplys2004, Hernandez2011, Bauer2014}}
	\hspace*{\fill}
};
\node[font=\small] (n3) at (-4.5,0.75) {
	\hspace*{\fill}
	\parbox{3cm}{\centering
		Double-tank system\\
		\SIrange[range-units=single, range-phrase=-]{1}{10}{\second} \cite{Araujo2014}}
	\hspace*{\fill}
};
\end{scope}
\begin{scope}[text width=6cm,text=black, align=center]
\node[font=\bfseries] (n1) at (3,-0.75) {This work};
\node[font=\small] (n3) at (3,0.75) {
	\hspace*{\fill}
	\parbox{2.5cm}{\centering
		Adaptive lighting\\
		\SI{30}{\second} \cite{Ceriotti2011}}
	\hspace*{\fill}
	\parbox{2.5cm}{\centering
		Data center management\\
		\SI{>20}{\second} \cite{Saifullah2014}}
	\hspace*{\fill}
};
\end{scope}
\end{tikzpicture}
\caption{Design space of wireless \cps that have been validated on real-world devices and wireless networks.}
\label{fig:solutions}
\end{figure}


Hence, this paper investigates the following question: \emph{Is it possible to enable fast feedback control and coordination with mode changes across \textbf{real-world} multi-hop low-power wireless networks while providing formally proven guarantees on closed-loop stability?}
Prior works on control over wireless that validate their design through experiments on physical platforms do not provide an affirmative answer.
As illustrated in \figref{fig:solutions} and detailed in \secref{sec:related}, solutions based on \emph{multi-hop} communication have only been demonstrated for physical systems with \emph{slow} dynamics (\ie update intervals of seconds) and do not provide stability guarantees.
Practical solutions with stability guarantees for \emph{fast} process dynamics (\ie update intervals of tens of milliseconds as typical of, \eg mechanical systems) exist, but these are only applicable to \emph{single-hop} networks and therefore lack the flexibility required by future \cps applications~\cite{Hayat2016,Luvisotto2017}.
None of these solutions considers mode changes and, indeed, there exists no distributed wireless system design with suppport for mode changes to date.

\fakepar{Contribution and road-map}
This paper presents the design, analysis, and real-world validation of a wireless \cps that fills this gap.
\secref{sec:overview} highlights the main challenges and corresponding system design goals we need to achieve when closing feedback loops fast over multi-hop wireless networks while supporting mode changes.
Underlying our approach is a careful co-design of the wireless embedded components (in terms of hardware and software) and the closed-loop control system, as detailed in Sections~\ref{sec:embedded} and \ref{sec:control}.
We tame typical wireless network imperfections, such as message loss and jitter, so that they can be tackled by well-known control techniques or safely neglected.
As a result, our design is amenable to a formal end-to-end analysis of all \cps components (\ie wireless embedded, control, and physical systems), which we exploit to provably guarantee closed-loop stability for physical systems with \emph{linear time-invariant~(\lti)} dynamics in the presence of mode changes.
Further, unlike prior work, our solution supports control and coordination of multiple physical systems out of the box, which is a key asset in many \cps applications~\cite{Hayat2016,Preiss2017,Abbenseth2017}.

To evaluate our design in \secref{sec:eval}, we have developed a cyber-physical testbed consisting of 20 wireless embedded devices forming a three-hop network and multiple cart-pole systems whose dynamics match a range of real-world mechanical systems~\cite{Astrom2008,Trimpe2012}.
As such, this testbed addresses an important need in \cps research~\cite{Lu2016a}.
Our experiments reveal the following key findings:
(\emph{i})~two inverted pendulums can be concurrently and safely stabilized by one or two remote controllers across the three-hop wireless network;
(\emph{ii})~the movement of five cart-poles can be synchronized reliably over the network;
(\emph{iii})~our system can safely change between different synchronization and stabilization tasks at runtime;
(\emph{iv})~increasing message loss, update intervals, and mode-change rates can be tolerated at reduced control performance;
(\emph{v})~the experiments confirm our theoretical~results.


In summary, this paper contributes the following:
\begin{itemize}
 \item We are the first to demonstrate feedback control and coordination over real-world multi-hop low-power wireless networks at update intervals of 20 to 50 milliseconds.
 \item We present the first practical wireless \cps design that supports timely and safe mode changes.
 \item We provide conditions to formally verify end-to-end closed-loop stability of our wireless \cps design for physical systems with \lti dynamics in the presence of noise and mode changes.
 \item Extensive experiments on a novel cyber-physical testbed show that our solution can stabilize and synchronize multiple inverted pendulums despite significant message loss.
\end{itemize}

This article significantly extends~\cite{mager2019} by (\emph{i}) adding support for mode changes in the wireless embedded system design (\secref{sec:design_mode_switches}), (\emph{ii}) incorporating mode changes in the stability analysis (\secref{sec:ctrl_mode_changes}), and (\emph{iii}) reporting on new experiments (Sections~\ref{sec:eval_mode_changes} and \secref{sec:eval_dwell_time}).
Moreover, the stability analysis in \secref{sec:stabAnalysis} has been extended to account for process and measurement noise.

\section{Related Work}
\label{sec:related} 

Feedback control over wireless networks has been extensively studied.
The control community has investigated design and stability analysis for wireless (and wired) networks based on different system architectures, delay models, and message loss processes;
recent surveys provide an overview of this large body of fundamental research~\cite{Hespanha2007,Zhang2013}.
However, the majority of those works focuses on theoretical analyses or validates new wireless \cps designs (\eg based on WirelessHART~\cite{Li2016,Ma2018}) only in simulation, thereby ignoring many fundamental challenges that may complicate or prevent a real implementation~\cite{Lu2016a}.
One of the challenges, as detailed in \secref{sec:overview}, is that even slight variations in the quality of a wireless link can trigger drastic changes in the routing topology~\cite{Ceriotti2011}---and this can happen several times per minute~\cite{Gnawali2009}.
Hence, to establish trust in mission-critical feedback control over wireless, a real-world validation against these \emph{dynamics} on a realistic \cps testbed is absolutely essential~\cite{Lu2016a}, as opposed to considering setups with a \emph{statically configured} routing topology and only a few nodes on a desk (as, \eg in~\cite{Schindler2017}).

\figref{fig:solutions} classifies control-over-wireless solutions that have been validated using experiments on physical platforms and against the dynamics of real wireless networks along two dimensions: the network diameter (\emph{single-hop} or \emph{multi-hop}) and the dynamics of the physical system (\emph{slow} or \emph{fast}).  While not representing absolute categories,
we use slow to refer to update intervals on the order of seconds, which is typically insufficient for feedback control of, for example, mechanical systems.

In the \emph{single-hop/slow} category, Araujo \etal \cite{Araujo2014} investigate resource efficiency of aperiodic control with closed-loop stability in a single-hop wireless network of IEEE 802.15.4 devices.
Using a double-tank system as the physical process, update intervals of one to ten seconds are sufficient.

A number of works in the \emph{single-hop/fast} class stabilize an inverted pendulum via a controller that communicates with a sensor-actuator node at the cart.
The update interval is \SI{60}{\ms} or less, and the interplay of control and network performance, as well as closed-loop stability are investigated for different wireless technologies: Bluetooth~\cite{Eker2001}, IEEE 802.11~\cite{Ploplys2004}, and IEEE 802.15.4~\cite{Bauer2014,Hernandez2011}.
Belonging to the same class, Ye \etal use three IEEE 802.11 nodes to control two dryer plants at update intervals of \SIrange[range-units=single, range-phrase=-]{100}{200}{\ms}~\cite{Ye2001}, and Lynch \etal use four proprietary wireless nodes to demonstrate control of a three-story test structure at an update interval of \SI{80}{\ms}~\cite{Lynch2008}.

As for \emph{multi-hop} networks, there are only solutions for \emph{slow} process dynamics without stability analysis.
For example, Ceriotti \etal study adaptive lighting in road tunnels~\cite{Ceriotti2011}.
The length of the tunnels makes multi-hop communication unavoidable, yet the required update interval of \SI{30}{\second} enables a reliable solution based on existing concepts.
Similarly, Saifullah \etal present a multi-hop solution for power management in data centers, using update intervals of \SI{20}{\second} or greater~\cite{Saifullah2014}.

In contrast to these works, as illustrated in \figref{fig:solutions}, in this paper we demonstrate \emph{fast} feedback control over low-power wireless \emph{multi-hop} networks (IEEE 802.15.4) at update intervals of \SIrange[range-units=single, range-phrase=-]{20}{50}{\ms}, which is
significantly faster than existing multi-hop solutions.
Moreover, we provide a formal stability proof, and our solution seamlessly supports both control and coordination of multiple physical systems, which we validate through experiments on a real-world cyber-physical testbed.

None of these references considers mode changes to adapt to, for example, changing application requirements and operating conditions to efficiently use the limited available resources.
Systems that change between different modes have been studied in the control community under the term \emph{switched systems}~\cite{liberzon1999basic,branicky1998multiple}.
Analyzing the stability of a switched system is difficult as even switching between stable subsystems may lead to an unstable overall system.
There exists also a large body of work on multi-mode systems in the real-time literature, developing different task models~\cite{Buttazzo1998}, analysis techniques~\cite{Phan2008}, and mode-change protocols~\cite{Chen2018}.
However, most of these efforts lack an experimental evaluation, and none of them tackles the challenges of a distributed wireless system.


\section{Problem Formulation and Approach}
\label{sec:overview}


Guaranteeing closed-loop stability for fast feedback control over multi-hop wireless networks in the presence of mode changes is an unsolved problem.
The problem originates from the scenarios targeted by emerging \cps applications.
This section distilles the main characteristics of the target scenarios, discusses the associated challenges, and outlines our approach to address those challenges.

\fakepar{Scenario}
We consider multi-mode wireless \cps consisting of embedded devices (\emph{nodes}) with low-power wireless radios distributed across physical space.
The nodes execute different \emph{application tasks} (\ie sensing, control, or actuation) that can exchange \emph{messages} with each other over a multi-hop wireless network.
Each node can execute multiple application tasks, which may belong to different feedback loops closed over the same wireless network.
%
\begin{figure}[!tb]
	\centering
	\includegraphics[width=0.6\linewidth]{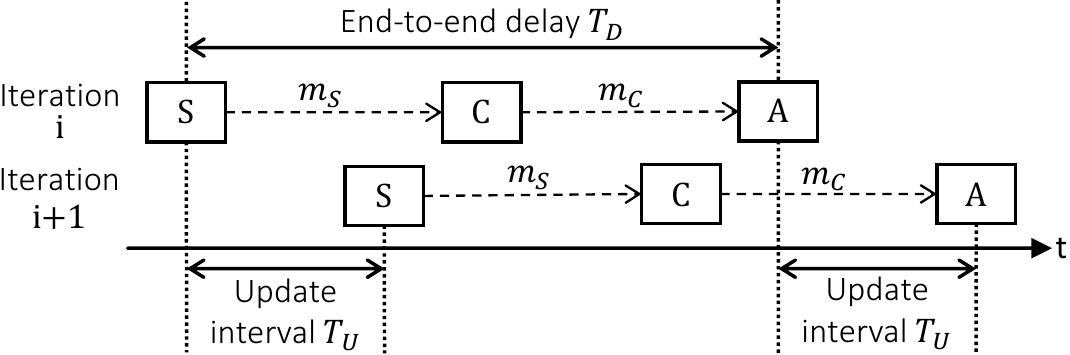}
	\caption{Tasks and messages of a feedback loop with a remote controller.
	\capt{In every iteration, the sensing task takes a measurement of the physical system and sends it to the control task, which computes a control signal and sends it to the actuation task.}}
	\label{fig:overview}
\end{figure}

As an example, \figref{fig:overview} depicts the execution of a given set of application tasks and the exchange of messages between them for a single periodic feedback loop with one sensor and one actuator.
As visible in the figure, the \emph{update interval}~\Tupdate is the time between consecutive sensing or actuation tasks, while the \emph{end-to-end delay}~\Tdelay is the time between corresponding sensing and actuation tasks.

At runtime, the wireless \cps can dynamically switch from one well-defined \emph{mode} to another, either in response to an event from the environment or in response to an event from within the system.
Example events include switching from system start-up to normal operation, the addition/removal of wireless embedded devices or physical systems, and the failure of hardware/software components.
Therefore, a mode change may involve a change in the set of physical systems and the set of devices the wireless \cps is composed of, in the set of application tasks executed by the devices including the control tasks and its parameters such as \Tupdate and \Tdelay, and in the amount of messages exchanged among the application tasks per unit of time.   
In this way, the wireless \cps adapts to changing application requirements and operating conditions to achieve the desired functionality and efficiency.


\fakepar{Challenges}
Fast feedback control over wireless multi-hop networks is an open problem to date due to the following fundamental challenges:
\begin{itemize}
 \item \emph{Lower end-to-end throughput.} Multi-hop networks have a lower end-to-end throughput than single-hop networks: Because of interference, the theoretical multi-hop upper bound on the throughput is half the single-hop upper bound~\cite{Osterlind2008}.
 This limits the amount of sensor readings and control signals that can be exchanged within a given maximum update interval.
 \item \emph{Significant delays and jitter.} Multi-hop networks also incur larger end-to-end communication delays compared with single-hop networks, and the delays are subject to larger variations because of message retransmissions or dynamically changing routing topologies~\cite{Ceriotti2011}, introducing significant jitter. Delays and jitter can both destabilize a feedback system~\cite{Wittenmark1995,Walsh2002}.
 \item \emph{Constrained traffic patterns.} In a single-hop network, each node can communicate with every other node due to the broadcast nature of the wireless medium. This is generally not the case in a multi-hop network. For example, WirelessHART only supports communication to and from a dedicated gateway that connects the wireless multi-hop network to the control system. Feedback control under constrained traffic patterns is more challenging, and may lead to poor control performance or even infeasibility of closed-loop stability~\cite{Yang2005}.
 \item \emph{Correlated message loss.}  Wireless networks are prone to unpredictable message loss, which complicates control design. Further, wireless interference and other disturbances can cause significant correlation among the losses observed over individual wireless links~\cite{Srinivasan2008},  which makes a valid theoretical analysis to provide strong stability guarantees hard, if not impossible.
 \item \emph{Message duplicates and out-of-order message delivery} are typical of many multi-hop wireless communication protocols~\cite{Gnawali2009,Duquennoy2015}, further hindering control design and stability analysis~\cite{Zhang2013}.
\end{itemize}

These challenges also complicate the execution of mode changes.
To transition from one mode to another in a timely and safe manner, \emph{all} nodes in the system must \emph{synchronously} change their mode.
In the absence of a shared clock, however, the system-wide time synchronization and signaling required to do so are hindered by the limited throughput, unpredictable communication delays, and message loss.
For these reasons, mode changes in a real distributed wireless system have not been demonstrated so far, not to mention formal guarantees on closed-loop stability.

\fakepar{Approach}
The co-design approach we adopt to fill this gap can be summarized as follows: \emph{Address the challenges through the wireless embedded system design to the extent possible, and then consider the resulting key properties in the control design.}
More concretely, we pursue three goals with our design of the wireless embedded hardware and software components:
\begin{itemize}
 \item[\textbf{G1}] reduce and bound imperfections impairing closed-loop stability and control performance (\eg make \Tupdate and \Tdelay as short as possible, and bound the worst-case jitter on both quantities);
 \item[\textbf{G2}] support predictable mode changes and arbitrary traffic patterns in multi-hop low-power wireless networks with real dynamics (\eg time-varying wireless links and network topologies);
 \item[\textbf{G3}] operate efficiently in terms of limited resources (\eg energy, wireless bandwidth, computational power), while accommodating the computational requirements of the controller.
\end{itemize}
On the other hand, our control design aims to achieve the following goals:
\begin{itemize}
 \item[\textbf{G4}] consider all essential properties of the wireless embedded system to guarantee closed-loop stability for the entire \cps across mode changes for physical systems with \lti dynamics;
 \item[\textbf{G5}] enable an efficient implementation of the controller on modern low-power embedded devices;
 \item[\textbf{G6}] exploit the support for arbitrary traffic patterns to straightforwardly solve distributed control tasks (\eg when a local controller requires information about other processes and controllers).
\end{itemize}

In the following, \secref{sec:embedded} describes the design of the wireless embedded system, while \secref{sec:control} details the control design and stability analysis.
Sections~\ref{sec:embedded} and~\ref{sec:control} address different aspects of a mode change, which we clarify and define throughout the discussion.\mz{There are, indeed, different aspects (and interpretations) of a mode change being addressed by the wireless embedded and the control design. In the control design, we defer the clarification until Section 5.4, which makes sense and also worked for the reviewers. Similarly, it makes sense to defer the clarification of what a mode change entails for the wireless embedded design until Section 4.3, because there it becomes relevant for the explanation of the scheduling problem, where each schedule essentially represents a different mode.}  


\section{Wireless Embedded System Design}
\label{sec:embedded}

To achieve goals \textbf{G1}--\textbf{G3}, we design a wireless embedded system consisting of four building blocks:
\begin{itemize}
 \item[1)] a \emph{low-power wireless protocol} that provides multi-hop many-to-all communication with minimal, bounded end-to-end delay and accurate network-wide time synchronization;
 \item[2)] a \emph{hardware platform} that enables a predictable and efficient execution of all application tasks and message transfers;
 \item[3)] a \emph{scheduling framework} to schedule for each mode all application tasks and message transfers so that given bounds on \Tupdate and \Tdelay are met at minimum communication energy costs;
 \item[4)] a \emph{mode-change protocol} to transition in a timely and safe manner between different modes at runtime in response to an event from the environment or from within the system.
\end{itemize}
We describe each building block below, followed by an analysis of the resulting properties that matter for the control design.

\subsection{Low-power Wireless Protocol}
\label{sec:wireless_protocol}

To support arbitrary traffic patterns (\textbf{G2}), we require a multi-hop protocol capable of many-to-all communication.
Moreover, the protocol must be highly reliable and the time needed for many-to-all communication must be tightly bounded~(\textbf{G1}).
It has been shown that a solution based on Glossy floods~\cite{Ferrari2011} can meet these requirements with high efficiency (\textbf{G3}) in the face of significant wireless dynamics (\textbf{G2})~\cite{Zimmerling2017}.
Thus, similar to other recent proposals~\cite{Ferrari2012,Istomin2016}, we design a low-power wireless protocol on top of Glossy, but aim at a new design point:
bounded end-to-end delays of at most a few tens of milliseconds \emph{for the many-to-all exchange of multiple messages} in a control cycle.

\begin{figure}[!tb]
	\centering
	\includegraphics[width=0.6\linewidth]{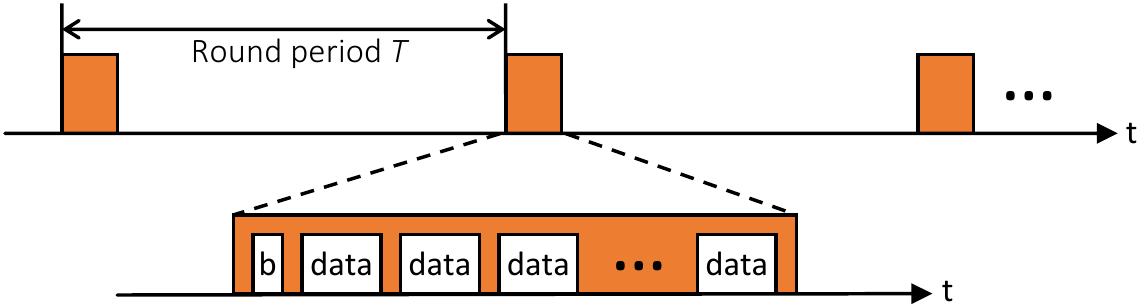}
	\caption{Time-triggered operation of low-power wireless protocol. \capt{Communication occurs in rounds that are scheduled with a given round period $T$. Every beacon (b) and data slot corresponds to a one-to-all Glossy flood~\cite{Ferrari2011}.}} 
	\label{fig:protocol}
\end{figure}

As shown in \figref{fig:protocol}, the operation of the protocol proceeds as a series of \emph{communication rounds} with \emph{period}~$T$.
Each round consists of a sequence of non-overlapping time \emph{slots}.
In every slot, all nodes in the network participate in a Glossy flood, where a message is sent from one node to all others.
Glossy approaches the theoretical minimum latency for one-to-all flooding at a reliability above 99.9\percent, and provides microsecond-level network-wide time synchronization~\cite{Ferrari2011}.
All nodes are synchronized at the beginning of every round during a flood initiated by a dedicated \emph{host} node in the \emph{beacon} slot (b).
Nodes exploit the synchronization to remain in a low-power sleep mode between rounds and to awake in time for the next round, as specified by the round period~$T$.

It is important to note that due to the way Glossy exploits \emph{synchronous transmissions}~\cite{Ferrari2011}, our wireless protocol operates \emph{independently} of the time-varying network topology.
This implies that any logic built atop the wireless protocol, such as a control or scheduling algorithm, need not worry about the state of individual wireless links or the positions of specific nodes in the network.
This is a fundamental difference to wireless protocols based on routing, such as WirelessHART and 6TiSCH. 
As we shall see in the following sections, the network topology independence greatly simplifies control design and allows for providing formally proven guarantees that also hold in practice.


As detailed in \secref{subsec:scheduling}, the communication schedule for each mode is computed offline based on the traffic demands, and is distributed to all nodes before the application operation starts.
A schedule includes the assignment of messages to \emph{data} slots in each round (see \figref{fig:protocol}) and the round period~$T$.
Using static schedules brings several benefits.
First, we can a priori verify if closed-loop stability can be guaranteed for the achievable latencies (see \secref{sec:control}).
Second, compared to prior solutions~\cite{Ferrari2012,Istomin2016,Zimmerling2017,Jacob2016}, we can support significantly shorter latencies, the protocol is more energy efficient (no need to send schedules), and more reliable (schedules cannot be lost over wireless).

\subsection{Hardware Platform}
\label{subsec:embeddedHardware}

\cps devices need to concurrently handle (possibly multiple) application tasks and message transfers.
While message transfers involve little but frequent computations (\eg serving interrupts from the radio hardware), sensing and especially control tasks may require less frequent but more demanding computations (\eg floating-point operations).
An effective approach to achieve low latency and high energy efficiency for such diverse workloads is to exploit hardware heterogeneity~(\textbf{G3}).

For this reason, we leverage a heterogeneous \emph{dual-processor platform (\dpp)}.
Application tasks execute exclusively on a 32-bit MSP432P401R ARM Cortex-M4F \emph{application processor (\ap)} running at 48\MHz, while the wireless protocol executes on a dedicated 16-bit CC430F5147 \emph{communication processor (\cp)} running at 13\MHz.
The \ap has a floating-point unit and a rich instruction set, which facilitate computations related to sensing and control.
The \cp features a low-power microcontroller and a low-power wireless radio operating at 250\kbps in the 868\MHz frequency band.

\ap and \cp are interconnected using \bolt~\cite{Sutton2015a}, an ultra-low-power processor interconnect that supports asynchronous bidi\-rec\-tion\-al message passing with formally verified worst-case execution times.
\bolt decouples the two processors with respect to time, power, and clock domains, enabling energy-efficient concurrent executions with only small and bounded interference, thereby limiting jitter and preserving the time-sensitive operation of the wireless protocol.

All {\cp}s are time-synchronized via the wireless protocol.
In addition, \ap and \cp must be synchronized locally to minimize end-to-end delay and jitter among application tasks running on different {\ap}s~(\textbf{G1}).
To this end, we use a GPIO line between the processors, called \emph{\sync} line.
Every \cp asserts the \sync line in response to an update of Glossy's time synchronization.
Every \ap  resynchronizes its local time base and schedules application tasks and message passing over \bolt with specific offsets relative to those \sync line events.
Likewise, the {\cp}s execute the current communication schedule and perform \sync line assertion and message passing over \bolt with specific offsets relative to the start of communication rounds.
Thus, all {\ap}s and {\cp}s act in concert.

\subsection{Scheduling Framework}
\label{subsec:scheduling}

\begin{figure}[!tb]
	\centering
	\includegraphics[width=1.0\linewidth]{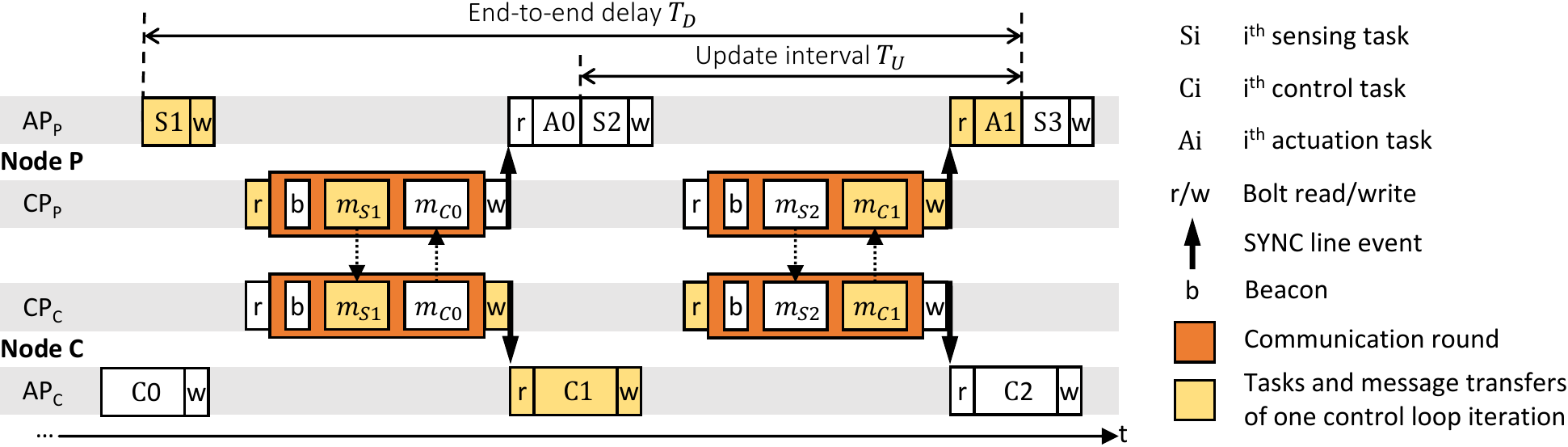}
	\caption{Example schedule of application tasks and message transfers between two {\dpp}s. \capt{Node P senses and acts on a physical system, while node C runs the controller. The update interval \Tupdate is half the end-to-end delay~\Tdelay.}}
	\label{fig:schedule}
\end{figure}

We illustrate the scheduling problem we need to solve with a simple example, where node P senses and acts on a physical system and node C runs the controller.
\figref{fig:schedule} shows a possible schedule of the application tasks and message transfers.
After sensing (S1), \apnode{P} writes a message containing the sensor reading into \bolt (w).
\cpnode{P} reads out the message (r) before the communication round in which that message ($m_{S1}$) is sent using the wireless protocol.
\cpnode{C} receives the message and writes it into \bolt.
After reading out the message from \bolt, \apnode{C} computes the control signal (C1) and writes a message containing it into \bolt.
The message ($m_{C1}$) is sent to \cpnode{P} in the next round, and then \apnode{P} applies the control signal on the physical system (A1).

This schedule resembles a pipelined execution, where in each communication round the last sensor reading and the next control signal (computed based on the previous sensor reading) are exchanged ($m_{S1} \, m_{C0}$, $m_{S2} \, m_{C1}$, $\ldots$).
Note that while it is indeed possible to send the corresponding control signal in the same round ($m_{S1} \, m_{C1}$, $m_{S2} \, m_{C2}$, $\ldots$), doing so would increase the update interval \Tupdate at least by the sum of the execution times of the control task, \bolt read, and \bolt write.
For the example schedule in \figref{fig:schedule}, the update interval \Tupdate is exactly half the end-to-end delay \Tdelay.

%

In general, the scheduling problem entails computing the communication schedule and the offsets with which all {\ap}s and {\cp}s in the system perform wireless communication, execute application tasks, transfer messages over \bolt, and assert the \sync line.
The problem gets extremely complex for realistic scenarios with more devices, physical systems, and feedback loops that are closed over the same wireless network. 
Moreover, whenever there is a change in the set of application tasks or the message transfers between application tasks, this corresponds to a mode change from the perspective of the wireless embedded system, which is associated with a distinct schedule.\mz{I think this sentence clarifies things pretty well and connects nicely with text before and after.}

We leverage Time-Triggered Wireless~(\ttw)~\cite{Jacob2018}, an existing framework tailored to automatically solve this type of scheduling problem.
\ttw assumes that every application task that is active in both the current mode and the next mode is first aborted and then restarted.
For each mode, \ttw takes as main input a dependency graph among the application tasks that are active in this mode and the message transfers between them, similar to \figref{fig:overview}.
Based on an integer linear program, it computes the communication schedules and all offsets mentioned above.
\ttw provides three important guarantees: (\emph{i})~a feasible solution is found if one exists, (\emph{ii})~the solution minimizes the energy consumption for wireless communication, and (\emph{iii})~the solution can additionally optimize user-defined metrics (\eg minimize the update interval \Tupdate \mbox{as for the schedule in~\figref{fig:schedule}).}
Before the wireless \cps is put into operation, the computed schedules are distributed to all devices.
At runtime, the system switches between different schedules using the mode-change protocol described next.

\subsection{Mode-change Protocol}
\label{sec:design_mode_switches}

\begin{figure}[!tb]
	\centering
	\includegraphics[width=0.97\linewidth]{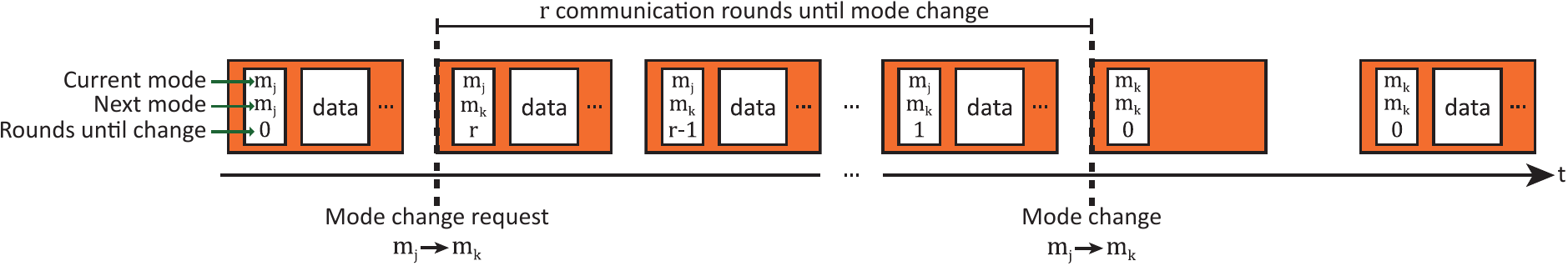}
	\caption{Operation of mode-change protocol. \capt{The beacon slot is used by the host to inform all nodes about the current mode, the next mode, and when to change to the next mode. The first communication round after a mode change contains no data slots because, in general, all messages generated prior to a mode change are meaningless.}}
		\label{fig:modeSwitchSchedule}
\end{figure}

To support predictable transitions between a well-defined set of modes (\textbf{G2}), we design a mode-change protocol whose operation is illustrated in \figref{fig:modeSwitchSchedule}.
Triggered by an event from the environment (\eg operator) or from within the system (\eg detection of a failure), the host sends a \emph{mode-change request} to all nodes by embedding the indices of the current and the next mode, $m_j$ and $m_k$, in the next beacon.
This beacon further contains the number of communication rounds~$r$ until the next mode becomes effective.
In the following $r-1$ rounds, the host keeps sending the mode-change request with a decreasing counter, so that nodes that have not yet received the request or new nodes connecting to the network are also informed of the upcoming mode change.
In turn, the \cp of every node informs the \ap over \bolt.
At the beginning of the communication round in which the counter reaches zero, all nodes that have received the request synchronously change to the new mode.
This involves aborting the old and (re)starting the new application tasks, flushing the \bolt queues, and implementing the new schedule.
Flushing the \bolt queues is appropriate because messages generated in the old mode may be meaningless in the new mode.
For the same reason, the communication round in which the counter reaches zero contains no data slots, as visible in \figref{fig:modeSwitchSchedule}.

Our mode-change protocol ensures timeliness and safety.
It guarantees a deterministic delay from when the first mode-change request with counter $r$ is sent until when the new mode becomes effective.
All nodes that are part of the network when the first mode-change request is sent have $r$ chances to learn about the upcoming mode change.
Hence, every such node changes to the new mode with probability $1 - p^r$, where $p$, the probability of missing a beacon, is typically less than 0.1\percent in practice~\cite{Ferrari2011}.
In the unlikely event that a node does not receive any of the $r$ mode-change requests, our mode-change protocol nevertheless ensures a safe operation.
In particular, a node is only allowed to participate in a communication round if it has received the beacon and the current mode (embedded in the beacon) matches its own local mode.
Therefore, it is guaranteed that a node never disturbes the rest of the system.
If a node misses a mode change, it executes a fall-back mechanism by which it eventually resynchronizes with the network upon the reception of a beacon.
Then, it changes to the new mode involving the steps outlined above, and starts to participate in the following communication round.
The stability guarantee derived in \secref{sec:control} is only applicable to a control loop if all nodes necessary to close that loop are part of the network and in the current mode.\mz{This is the most concise formulation I could come up with.}

While \ttw and our mode-change protocol could be extended, for example, to preserve periodicity of application tasks across mode changes~\cite{Chen2018}, this paper focuses on the interplay of control per\-for\-mance/stability and mode changes when closing feedback loops over multi-hop wireless networks.

\vspace{-0.5mm}
\subsection{Essential Properties and Jitter Analysis}
\label{sec:properties}
\vspace{-0.5mm}

%

The presented wireless embedded system provides the following properties for the control design:
\begin{itemize}
\item[\textbf{P1}] As analyzed below, for update intervals \Tupdate and end-to-end delays \Tdelay up to 100\ms, the worst-case jitter on \Tupdate and \Tdelay is bounded by \SI{\pm50}{\micro\second}.
\Tupdate and \Tdelay are constant for each mode.
\item[\textbf{P2}] Statistical analysis of millions of Glossy floods~\cite{Zimmerling2013} and percolation theory for time-varying networks~\cite{Karschau2018} have shown that the spatio-temporal diversity in a Glossy flood reduces the temporal correlation in the series of received and lost messages by a node, to the extent that the series can be safely approximated by an \iid Bernoulli process.
Using Glossy, the success probability in real multi-hop low-power wireless networks is typically larger than 99.9\percent~\cite{Ferrari2011}.
\item[\textbf{P3}] Because the multi-hop wireless protocol provides many-to-all communication, arbitrary traffic patterns required by the application are efficiently supported.
\item[\textbf{P4}] With high probability all nodes synchronously change mode, and otherwise do not disturb the running system. During a mode change, there is a dead time equal to \Tupdate of the new mode.
\item[\textbf{P5}] It is guaranteed that message duplicates and out-of-order message deliveries do not occur.
\end{itemize}
%

To underpin \textbf{P1}, we analyze the \emph{worst-case} jitter on \Tupdate and \Tdelay.
We refer to \Tjitter as the nominal time interval between the end of two tasks executed on (possibly) different {\ap}s.
Due to jitter \Jitter, this interval may vary, resulting in an actual length of $\Tjitter + \Jitter$.
In our system, the jitter is bounded by
\begingroup
\setlength{\abovedisplayskip}{3pt}
\setlength{\belowdisplayskip}{3pt}
\begin{equation}
	|\, \Jitter \,| \leq 2 \, \left(\eREFmax + \eSYNCmax +  \Tjitter \, (\rhoAPmax + \rhoCPmax) \right) + \eTASKmax,
	\label{eq:jitter_bound}
\end{equation}
\endgroup
where each term on the right-hand side of \eqref{eq:jitter_bound} is detailed below.


\emph{1) Time synchronization error between {\cp}s.}
Using Glossy, each \cp computes an estimate \trefEstim of the reference time~\cite{Ferrari2011} to schedule subsequent activities.
In doing so, each \cp makes an error \eREF with respect to the reference time of the initiator.
Using the approach from~\cite{Ferrari2011}, we measure \eREF for our Glossy implementation and a network diameter of up to nine hops.
Based on 340,000 data points, we find that \eREF ranges always between \SI{-7.1}{\micro\second} and \SI{8.6}{\micro\second}.
We thus consider $\eREFmax=\SI{10}{\micro\second}$ a safe bound for the jitter on the reference time between~{\cp}s.

\emph{2) Independent clocks on \cp and \ap.}
Each \ap schedules activities relative to \sync line events.
As \ap and \cp are sourced by independent clocks, it takes a variable amount of time until an \ap detects that \cp asserted the \sync line.
The resulting jitter is bounded by $\eSYNCmax=1/\fAP$, where $\fAP=48\MHz$ is the frequency of {\ap}s clock.

\emph{3) Different clock drift at {\cp}s and {\ap}s.}
The real offsets and durations of activities on the {\cp}s and {\ap}s depend on the frequency of their clocks.
Various factors contribute to different frequency drifts \rhoCP and \rhoAP, including the manufacturing process, ambient temperature, and aging effects.
State-of-the-art clocks, however, drift by at most $\rhoCPmax = \rhoAPmax = 50\ppm$~\cite{Lenzen2015}.

\emph{4) Varying task execution times.}
The difference between the task's best- and worst-case execution time of the last task in the chain, $\eTASKmax$, adds to the jitter.
For the jitter on the update interval \Tupdate and the end-to-end delay \Tdelay, the last task is the actuation task (see \figref{fig:overview}), which typically exhibits little variance as it is short and highly deterministic.
For example, the actuation task in our experiments has a jitter of $\pm3.4$\us.
To be safe, we consider $\eTASKmax = \SI{10}{\micro\second}$ for our analysis.

Using \eqref{eq:jitter_bound} and the above values, we can compute the worst-case jitter for a given interval \Tjitter.
Fast feedback control as considered in this paper requires $100\ms \geq \Tjitter = \Tdelay > \Tupdate$, which gives a worst-case jitter of \SI{\pm50}{\micro\second} on \Tupdate and \Tdelay, as stated by \textbf{P1}.
\secref{sec:multihop_stabilization} validates this experimentally.

\section{Control Design and Analysis}
\label{sec:control}
Building on the design of the wireless embedded system and its properties \textbf{P1}--\textbf{P5}, this section addresses the design of the control system to accomplish goals \textbf{G4}--\textbf{G6} from \secref{sec:overview}.
Because the wireless system supports arbitrary traffic patterns (\textbf{P3}), various control tasks can be solved including typical single-loop tasks such as stabilization, disturbance rejection, or set-point tracking and multi-agent scenarios such as synchronization, consensus, or formation control.

Here, we focus on remote stabilization over wireless and synchronization of multiple agents as prototypical examples for both the single- and multi-agent case.
For stabilization, the modeling and control design are presented in Sections~\ref{sec:model} and \ref{sec:ctrlDesign}, thus achieving \textbf{G5}. 
The stability analysis for the remote stabilization scenario is provided in \secref{sec:stabAnalysis}. 
The wireless embedded system offers the flexibility to dynamically change between different modes.
In \secref{sec:ctrl_mode_changes}, we extend the stability analysis to also account for mode changes to fulfill \textbf{G4}.
Multi-agent synchronization is discussed in \secref{sec:sync}, highlighting support for straightforward distributed control to achieve \textbf{G6}.

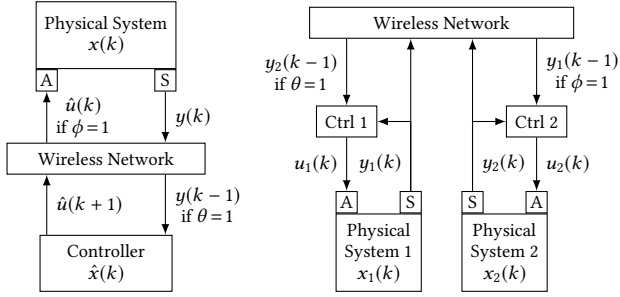
\begin{figure}[!tb]
\centering
\tikzset{radiation/.style={{decorate,decoration={expanding
waves,angle=90,segment length=4pt}}}}
\tikzset{>=latex}
\tikzset{font=\scriptsize}
\begin{tikzpicture}
\node[draw,rectangle, minimum height = 2em,align=center,minimum width = 5em](controller){Controller\\ $\hat{x}(k)$};
.\node[draw, rectangle, minimum width = 7.5em,above = 2.25em of controller, minimum height = 1em](network){Wireless Network};
\node[draw,rectangle,minimum height = 2.5em,above = 2.75em of network,align = center](system){Physical System\\$x(k)$};
\node[draw,rectangle,below=0em of system.south west, anchor = north west,inner sep=0,minimum width=0.8em, minimum height=0.8em](act){A};
\node[draw,rectangle,below=0em of system.south east, anchor = north east,inner sep=0,minimum width=0.8em, minimum height=0.8em](sens){S};
\draw[->] (controller.north-|act)-- node[midway,right]{$\hat{u}(k+1)$} (network.south-|act);
\draw[->] (network.south-|sens)-- node[midway,right, align = center]{$y(k-1)$\\ if $\theta\!=\!1$} (controller.north-|sens);
\draw[->] (network.north-|act)-- node[midway,right,align=center]{$\hat{u}(k)$\\if $\phi\!=\!1$} (act.south);
\draw[->] (sens.south)-- node[midway,right]{$y(k)$} (network.north-|sens);
\end{tikzpicture}
\begin{tikzpicture}
\node[draw,rectangle,minimum height = 2.5em,align = center](system1){Physical\\ System 1\\$x_1(k)$};
\node[draw,rectangle,below=0em of system1.north west, anchor = south west,inner sep=0,minimum width=0.8em, minimum height=0.8em](act1){A};
\node[draw,rectangle,below=0em of system1.north east, anchor = south east,inner sep=0,minimum width=0.8em, minimum height=0.8em](sens1){S};
\node[draw,rectangle, minimum height = 2em,align=center, right = 1.5em of system1](system2){Physical\\ System 2\\ $x_2(k)$};
\node[draw,rectangle,below=0em of system2.north west, anchor = south west,inner sep=0,minimum width=0.8em, minimum height=0.8em](sens2){S};
\node[draw,rectangle,below=0em of system2.north east, anchor = south east,inner sep=0,minimum width=0.8em, minimum height=0.8em](act2){A};
\node[draw, rectangle, minimum width = 10em,above = 8em of $(system1)!0.5!(system2)$, minimum height = 1em](network){Wireless Network};
\node[draw, rectangle, above = 2em of act1](ctrl1){Ctrl 1};
\node[draw,rectangle,above=2em of act2](ctrl2){Ctrl 2};
\draw[->] (sens2)-- node[pos=0.18,right]{$y_2(k)$} (network.south-|sens2);
\draw[->](sens2) |- (ctrl2);
\draw[->] (network.south-|act2)-- node[midway,right,align=center]{$ y_1(k-1)$\\if $\phi\!=\!1$} (ctrl2);
\draw[->](ctrl1) -- node[midway, left]{$u_1(k)$} (act1);
\draw[->] (network.south-|act1)-- node[midway,left,align=center]{$y_2(k-1)$\\if $\theta\!=\!1$} (ctrl1);
\draw[->](ctrl2) -- node[midway,right]{$u_2(k)$}(act2);
\draw[->] (sens1)-- node[pos=0.18,left]{$y_1(k)$} (network.south-|sens1);
\draw[->](sens1) |- (ctrl1);
\end{tikzpicture}
%
\caption{
Considered wireless control tasks: stabilization (left) and synchronization (right).
\capt{LEFT: The feedback loop for stabilizing the physical system is closed over the (multi-hop) low-power wireless network, which induces delay and message losses (captured by \iid Bernoulli variables $\theta$ and $\phi$).  Sensor/actuator and controller are spatially distributed and associated with different nodes.  RIGHT: Two physical systems, each with a local controller (Ctrl), are synchronized over the wireless network.}
}
\label{fig:WirelessControlModel}
\end{figure}

\subsection{Model of Wireless Control System}
\label{sec:model}
We address the remote stabilization task depicted in \figref{fig:WirelessControlModel} (left), where controller and physical system are associated with different nodes, which can communicate via the multi-hop wireless network.
Such a scenario is relevant, for instance, in process control, where the controller often resides at a remote location~\cite{Ma2018}.
We consider stochastic LTI dynamics for the physical process
\begin{subequations}
\label{eqn:sys_complete}
\begin{align}
\label{eqn:ssmodel}
x(k+1) = Ax(k) + Bu(k) + v(k).
\end{align}
The model describes the evolution of the system state $x(k)\in\mathbb{R}^n$ with discrete time index $k \in \mathbb{N}$ in response to control input $u(k)\in\mathbb{R}^m$ and random process noise $v(k)\in\mathbb{R}^n$.
The process noise is (as typical in the literature~\cite{Astrom2008,Hespanha2007}) modeled as an \iid Gaussian random variable with zero mean and variance $\Sigma_\mathrm{proc}$ (\ie $v(k)\sim \mathcal{N}(0,\Sigma_\mathrm{proc})$), capturing, for instance, uncertainty in the model.

We assume that the full system state $x(k)$ can be measured through appropriate sensors, that is,
\begin{align}
\label{eqn:meas}
y(k) = x(k)+w(k),
\end{align} 
\end{subequations}
with sensor measurements $y(k)\in\mathbb{R}^n$ and sensor noise $w(k)\in\mathbb{R}^n$, $w(k)\sim \mathcal{N}(0,\Sigma_\mathrm{meas})$.
If the full state cannot be measured directly, it can often be reconstructed via state estimation techniques \cite{Astrom2008}.

The process model \eqref{eqn:sys_complete} is stated in discrete time. 
This representation is particularly suitable as the wireless system has in each mode a constant update interval \Tupdate with worst case jitter \SI{\pm50}{\micro\second}~(\textbf{P1}), which can be neglected from control's perspective in the considered applications~\cite[p.~48]{Cervin2003}.
Thus, $u(k)$ and $y(k)$ in \eqref{eqn:sys_complete} represent sensing and actuation at periodic intervals \Tupdate as in \figref{fig:schedule}.
As in the example schedule in \figref{fig:schedule}, we consider in the following the relation $\Tdelay \!=\! 2\Tupdate$ between end-to-end delay and update interval.
Nevertheless, we note that our system model, control design, and stability analysis can be readily extended to account for other combinations, including $\Tdelay = n\Tupdate$ ($n\in\mathbb{N}$) as well as non-identical sensor-to-controller and controller-to-actuator delays.

With this, as shown in \figref{fig:WirelessControlModel}, measurements $y(k)$ and control inputs $\hat{u}(k)$ are sent over the wireless network, and both arrive at the controller, respectively system, with a delay of \Tupdate and with a probability governed by two independent Bernoulli processes.
The Bernoulli assumption is very common as it can significantly simplify control design and stability analysis~\cite{Hespanha2007,Zhang2013}, but unlike traditional wireless systems~\cite{Srinivasan2008} this assumption is indeed valid for our wireless embedded design (\textbf{P2}).
We represent the Bernoulli processes by $\theta(k)$ and $\phi(k)$, which are \iid binary variables, indicating lost ($\theta(k) = 0$, $\phi(k)=0$) or successfully received ($\theta(k)=1$, $\phi(k)=1$) messages.
To ease notation and since both variables are \iid, we can omit the time index in the following without any confusion.
We denote the probability of successful delivery by $\mu_\theta$ (\ie $\mathbb{P}[\theta=1] = \mu_\theta$), respectively $\mu_\phi$.
As both, measurements and control inputs, are delayed, it also follows that in case of no message loss, the applied control input $u(k)$ depends on the measurement two steps ago $y(k-2)$.
If a control input message is lost, the input stays constant since zero-order hold is used at the actuator, that is,
\begin{align}
\label{eqn:ZOH}
u(k)=\phi \hat{u}(k)+\left(1-\phi\right)u(k-1).
\end{align}

The model proposed in this section thus captures properties \textbf{P1}, \textbf{P2}, and \textbf{P5}.
While \textbf{P1} and \textbf{P2} are incorporated in the presented dynamics and message loss models, \textbf{P5} means that there is no need to take duplicated or out-of-order sensor measurements and control inputs into account.
Overall, these properties allow for accurately describing the wireless \cps by a fairly straightforward model, which greatly facilitates subsequent control design and analysis.
Property \textbf{P3} is not considered here when dealing with a single control loop, but becomes essential in \secref{sec:sync}.

\subsection{Controller Design}
\label{sec:ctrlDesign}
Designing a feedback controller for the system \eqref{eqn:sys_complete}, we proceed by first discussing state-feedback control for the nominal system (\ie without delays, message loss, and noise), and then enhance the design to cope with the network and sensing imperfections.

\fakepar{Nominal design}
Assuming ideal measurements, we have $y(k)\! =\! x(k)$.
A common strategy in this setting is static state-feedback control, $u(k)\! =\! Fx(k)$, where $F$ is a constant feedback matrix, which can be designed, for instance, via \emph{pole placement} or optimal control such as the \emph{linear quadratic regulator (LQR)} \cite{Astrom2008,Anderson2007}.
Assuming that the system is controllable~\cite{Astrom2008}, which is standard in control design, stable closed-loop dynamics~\eqref{eqn:ssmodel} can be obtained in this way.

\fakepar{Actual design}
We augment the nominal state-feedback design with state predictions to cope with non-idealities, in particular delayed measurements and message loss, as shown in \figref{fig:WirelessControlModel} (left).

Because the measurement arriving at the controller $y(k-1)$ represents information that is one time step in the past, the controller propagates the system for one step as follows:
\begin{align}
\hat{x}(k) &= \theta Ay(k\!-\!1)+(1\!-\!\theta)(A\hat{x}(k\!-\!1))+B\hat{u}(k\!-\!1) \label{eqn:prediction} \\
&= \theta Ax(k\!-\!1) + (1\!-\!\theta)A\hat{x}(k\!-\!1)+B\hat{u}(k\!-\!1)+\theta Aw(k\!-\!1), \nonumber
\end{align}
where $\hat{x}(k)$ is the predicted state, and $\hat{u}(k)$ is the control input
computed by the controller
(to be made precise below).  Both variables are computed by the controller and represent its internal states.  The rationale of \eqref{eqn:prediction} is as follows:  If the measurement message is delivered (the controller has information about $\theta$ because it knows when to expect a message), we compute the state prediction based on this measurement $y(k\!-\!1)\! = \!x(k\!-\!1)+w(k\!-\!1)$; if the message is lost, we propagate the previous prediction $\hat{x}(k\!-\!1)$.
As there is no feedback on lost control messages (\ie about $\phi$) and thus a potential mismatch between the computed input $\hat{u}(k\!-\!1)$ and the actual $u(k\!-\!1)$, the controller can only use $\hat{u}(k\!-\!1)$
in the prediction.

Using $\hat{x}(k)$, the controller has an estimate of the current state of the system.
However, it takes another time step for the currently computed control input to arrive at the physical system.  For computing the next control input, we thus propagate the system another step,
\begin{align}
\label{eqn:pred_inp}
\hat{u}(k+1)&=F\left(A\hat{x}(k)+B\hat{u}(k)\right),
\end{align}
where $F$ is as in the nominal design.
The input $\hat{u}(k+1)$ is sent over the wireless network (see \figref{fig:WirelessControlModel}).

The overall controller design requires only a few matrix multiplications per execution.
This can be efficiently implemented on constrained embedded devices, thus satisfying goal \textbf{G5}.
Moreover, it allows for a formal end-to-end stability analysis as described below, \mbox{thereby satisfying goal \textbf{G4}.}

\subsection{Stability Analysis: Remote Stabilization}
\label{sec:stabAnalysis}
We now prove stability for the closed-loop system given by the dynamic system of \secref{sec:model} and the controller proposed in \secref{sec:ctrlDesign}.  
The model in \secref{sec:model} accounts for the physical process and the essential properties of the wireless embedded system.  The stability proof in this section thus guarantees stability for the remote stabilization scenario.
In \secref{sec:ctrl_mode_changes}, we show that this also guarantees stability under mode changes if the system stays in each mode for sufficient time.

While the process dynamics \eqref{eqn:sys_complete} are time invariant, the message loss introduces time variation and randomness into the system dynamics.  
Therefore, we will leverage stability results for linear, stochastic, time-varying systems~\cite{Boyd1994}.
We first introduce a few required definitions and preliminary results, and then apply these to our problem.
Consider the system
\begin{align}
\label{eqn:gen_sys_noise}
z(k+1)=\tilde{A}(k)z(k) + \tilde{E}(k)\epsilon(k),
\end{align}
with state $z(k)\in\mathbb{R}^n$, $\tilde{A}(k) = \tilde{A}_0 + \sum_{i=1}^L \tilde{A}_ip_i(k)$ and $\tilde{E}(k)=\tilde{E}_0+\sum_{i=1}^L\tilde{E}_ip_i(k)$,
where  $p_i(k)$ are \iid random variables with mean $\E[p_i(k)]=0$, variance $\Var[p_i(k)]=\sigma_{p_i}^2$, and $\E[p_i(k)p_j(k)]=0 \,\forall i,j$; and $\epsilon(k)$ is a vector of \iid Gaussian random variables representing process and measurement noise with $\E[\epsilon(k)]=0$, $\E[\epsilon(k) \epsilon^\mathrm{T}(k)]=W$, and $\epsilon(k)$ independent over time and from $p_i(k)$ and $z(k)$ at every $k$. 
%
A common stability notion for stochastic systems like~\eqref{eqn:gen_sys_noise} is mean-square stability:
\begin{defi}[\cite{freudenberg2010stabilization, zaidi2014stabilization}]
\label{def:MSS}
Let $M(k) := \E[z(k)z^\mathrm{T}(k)]$ denote the state correlation matrix. The system~\eqref{eqn:gen_sys_noise} is \emph{mean square stable (MSS)} if $\, \lim_{k\to\infty}M(k)<\infty$ for any initial $z(0)$.
\end{defi}
For a system that is constantly perturbed by Gaussian noise $\epsilon(k)$, the state correlation does not vanish, but mean-square stability guarantees that it is bounded.
If, however, $\epsilon(k)=0\,\forall k$, we can guarantee the state correlation matrix to vanish.
For systems without noise, we thus employ the following, stricter definition of mean-square stability:
\begin{defi}[{\cite[p.~131]{Boyd1994}}]
\label{def:MSS_noise_free}
The system \eqref{eqn:gen_sys_noise} with $\epsilon(k)=0\,\forall k$ is \emph{MSS} if $\,\lim_{k\to\infty} M(k) = 0$ for any initial $z(0)$.
\end{defi}
Mean-square stability then means $z(k) \to 0$ almost surely as $k\to\infty$, \cite[p.~131]{Boyd1994}.
For ease of exposition, we start by considering \eqref{eqn:gen_sys_noise} without noise, and then extend the analysis to account for noise. 


In control theory, linear matrix inequalities (LMIs) are often used as computational tools to check for system properties such as stability (see \cite{Boyd1994} for an introduction and details).  
For mean-square stability, we employ the following LMI stability result:
\begin{lem}[{\cite[p.~131]{Boyd1994}}]
\label{lem:LMIcond}
System~\eqref{eqn:gen_sys_noise} with $\epsilon(k)=0\,\forall k$ is MSS in the sense of Definition~\ref{def:MSS_noise_free} if, and only if, there exists a positive definite matrix
$P>0$ such that
\begin{equation}
\label{eqn:LMI}
\tilde{A}_0^\mathrm{T}P\tilde{A}_0-P+\sum\nolimits_{i=1}^L\sigma_{p_i}^2\tilde{A}^\mathrm{T}_iP\tilde{A}_i <0 .
\end{equation}
\end{lem}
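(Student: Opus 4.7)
The plan is to exploit the fact that, under the moment assumptions on the $p_i(k)$, the state correlation $M(k)=\E[z(k)z^\mathrm{T}(k)]$ of the noise-free system evolves linearly, so the question of mean-square stability reduces to a standard Lyapunov problem on the cone of positive semidefinite matrices. My first step would be to derive this recursion explicitly. Since $p_i(k)$ is independent of $z(k)$, and using $\E[p_i(k)]=0$ together with $\E[p_i(k)p_j(k)]=\sigma_{p_i}^2\delta_{ij}$, a direct computation gives
\begin{equation*}
M(k+1)=\tilde{A}_0 M(k)\tilde{A}_0^\mathrm{T}+\sum_{i=1}^L \sigma_{p_i}^2\,\tilde{A}_i M(k)\tilde{A}_i^\mathrm{T} \; =: \; \mathcal{L}\bigl(M(k)\bigr),
\end{equation*}
so that MSS in the sense of Definition~\ref{def:MSS_noise_free} is equivalent to the linear map $\mathcal{L}$, acting on $n\times n$ symmetric matrices, being Schur, i.e., having spectral radius strictly less than one. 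The adjoint of $\mathcal{L}$ with respect to the trace inner product is $\mathcal{L}^\ast(P)=\tilde{A}_0^\mathrm{T}P\tilde{A}_0+\sum_i \sigma_{p_i}^2\tilde{A}_i^\mathrm{T}P\tilde{A}_i$, which is exactly the combination appearing in \eqref{eqn:LMI}. Thus the LMI reads $\mathcal{L}^\ast(P)-P<0$.

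For \textbf{sufficiency}, I would introduce the Lyapunov candidate $V(z)=z^\mathrm{T}Pz$ for a $P>0$ satisfying \eqref{eqn:LMI}. A one-step conditional expectation yields $\E\bigl[V(z(k+1))\mid z(k)\bigr]=z(k)^\mathrm{T}\mathcal{L}^\ast(P)z(k)$, which by the LMI is bounded by $z(k)^\mathrm{T}(P-Q)z(k)$ for some $Q>0$. Taking total expectations, one obtains $\E[V(z(k+1))]\le \alpha\,\E[V(z(k))]$ for $\alpha:=1-\lambda_{\min}(Q)/\lambda_{\max}(P)\in[0,1)$, so $\E[V(z(k))]\to 0$ geometrically; combined with $\lambda_{\min}(P)\,\Tr M(k)\le \E[V(z(k))]$, this yields $M(k)\to 0$ for every $z(0)$, which is the required MSS.

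For \textbf{necessity}, assume MSS. Then $\mathcal{L}^k(M(0))\to 0$ for every initial $M(0)\succeq 0$, which, because $\mathcal{L}$ is a linear map on a finite-dimensional space, forces $\rho(\mathcal{L})<1$; by general properties of adjoint operators the same holds for $\mathcal{L}^\ast$. I would then construct $P$ explicitly by a Lyapunov-type series: pick any $Q>0$ and set $P:=\sum_{k=0}^{\infty} (\mathcal{L}^\ast)^k(Q)$. Convergence of the series is guaranteed by $\rho(\mathcal{L}^\ast)<1$, and positive-definiteness of each summand (with $Q>0$ in the zeroth term) ensures $P>0$. A telescoping argument gives $\mathcal{L}^\ast(P)-P=-Q<0$, which is precisely \eqref{eqn:LMI}.

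The main obstacle I anticipate is the necessity direction, specifically the passage from ``the deterministic linear recursion $M(k+1)=\mathcal{L}(M(k))$ drives every initial condition to zero'' to ``$\rho(\mathcal{L})<1$''. On a finite-dimensional space this is standard, but some care is needed because $\mathcal{L}$ is naturally defined on the cone of symmetric positive semidefinite matrices rather than on a vector space: one has to argue that linearity extends $\mathcal{L}$ to the whole space of symmetric matrices and that convergence on the cone is equivalent to Schur stability of the extended operator, which follows because every symmetric matrix can be written as a difference of two positive semidefinite matrices. Everything else (the Lyapunov calculation and the series construction) is then routine.
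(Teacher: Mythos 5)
Your proof is correct. Note that the paper itself does not prove this lemma --- it is imported verbatim from Boyd et al.\ (the cited reference), so there is no in-paper argument to compare against; what you have written is essentially the standard proof from that source. Your two key steps are both sound: (i) the correlation recursion $M(k+1)=\mathcal{L}(M(k))$ follows from $\E[p_i(k)]=0$, the cross-moment condition, and independence of $p_i(k)$ from $z(k)$ (and it matches the operator $\Gamma$ that the paper itself uses in its appendix for the noisy case); (ii) the identification of the LMI as $\mathcal{L}^\ast(P)-P<0$ via the trace inner product is exactly right. The sufficiency direction via $V(z)=z^\mathrm{T}Pz$ and the bound $\E[V(z(k))]\ge\lambda_{\min}(P)\Tr M(k)$ is complete, and your necessity direction correctly handles the only subtle point --- that Definition~\ref{def:MSS_noise_free} only supplies rank-one initial correlations $z(0)z(0)^\mathrm{T}$, so one must extend by linearity over the PSD cone and then to all symmetric matrices before concluding $\rho(\mathcal{L})<1$; the Neumann-series construction $P=\sum_k(\mathcal{L}^\ast)^k(Q)$ then delivers the certificate. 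No gaps.
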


We now apply this result to the system and controller from \secref{sec:model} and \secref{sec:ctrlDesign}.
The closed-loop dynamics are given by \eqref{eqn:sys_complete}--\eqref{eqn:pred_inp}, which we rewrite
as an augmented system
\begin{align}
\label{eqn:matrix_repr}
\underbrace{\begin{pmatrix}
x(k+1)\\\hat{x}(k+1)\\u(k+1)\\ \hat{u}(k+1)
\end{pmatrix}}_{z(k+1)}
&=\underbrace{\begin{pmatrix}
A&0&B&0\\
\theta A&(1-\theta)A&0&B\\
0&\phi FA&(1-\phi)I&\phi FB\\
0&FA&0&FB
\end{pmatrix}}_{\tilde{A}(k)}
\underbrace{\begin{pmatrix}
x(k)\\ \hat{x}(k)\\u(k)\\ \hat{u}(k)
\end{pmatrix}}_{z(k)}.
\end{align}
The system has the form of \eqref{eqn:gen_sys_noise} with $\epsilon(k)=0$; the transition matrix depends on $\theta$ and $\phi$, and thus on time (omitted for simplicity).  
We can thus apply Lemma~\ref{lem:LMIcond} to obtain the stability result.
\begin{theo}
\label{thm:MSSourSystem}
The system~\eqref{eqn:matrix_repr} is MSS in the sense of Definition~\ref{def:MSS_noise_free} if, and only if, there exists a $P>0$ such that \eqref{eqn:LMI} holds with $L=2$ and
\begin{align*}
\tilde{A}_0 &= \begin{pmatrix}
A&0&B&0\\
\mu_\theta A&(1-\mu_\theta)A&0&B\\
0&\mu_\phi FA&(1-\mu_\phi)I&\mu_\phi FB\\
0&FA&0&FB
\end{pmatrix}, \quad
\tilde{A}_1 = \begin{pmatrix}
0&0&0&0\\
-\mu_\theta A&\mu_\theta A&0&0\\
0&0&0&0\\
0&0&0&0
\end{pmatrix}, \\
\tilde{A}_2 &= \begin{pmatrix}
0&0&0&0\\
0&0&0&0\\
0&-\mu_\phi FA&\mu_\phi I & -\mu_\phi FB\\
0&0&0&0
\end{pmatrix}, \quad
\sigma_{p_1}^2 = \sfrac{1}{\mu_\theta}-1, \quad
\sigma_{p_2}^2 = \sfrac{1}{\mu_\phi} -1.
\end{align*}
\end{theo}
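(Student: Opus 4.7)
My plan is to derive the result as a direct application of Lemma~\ref{lem:LMIcond} to the augmented closed-loop dynamics~\eqref{eqn:matrix_repr}. Since~\eqref{eqn:matrix_repr} is already a noise-free linear system driven by a random, time-varying matrix, the only nontrivial step is to rewrite its transition matrix $\tilde{A}(k)$ in the canonical form $\tilde{A}_0 + \sum_{i=1}^{L} \tilde{A}_i\, p_i(k)$ required by the lemma, with zero-mean, \iid random scalars $p_i(k)$ satisfying $\E[p_i(k) p_j(k)] = 0$ for $i\neq j$; once this is done, \eqref{eqn:LMI} with $L=2$ and the tabulated $\tilde{A}_i$, $\sigma_{p_i}^2$ follows immediately.

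First, I would introduce the centered and normalized Bernoulli variables
\begin{equation*}
p_1(k) := \frac{\mu_\theta - \theta(k)}{\mu_\theta}, \qquad p_2(k) := \frac{\mu_\phi - \phi(k)}{\mu_\phi},
\end{equation*}
so that $\theta(k) = \mu_\theta(1-p_1(k))$ and $\phi(k) = \mu_\phi(1-p_2(k))$. By property~\textbf{P2} together with the independence of the sensor-to-controller and controller-to-actuator Bernoulli channels assumed in \secref{sec:model}, $p_1$ and $p_2$ are \iid across time and mutually independent. A direct computation then gives $\E[p_i(k)]=0$, $\Var[p_1(k)] = \mu_\theta(1-\mu_\theta)/\mu_\theta^2 = 1/\mu_\theta - 1$ (and analogously for $p_2$), and $\E[p_1(k)p_2(k)]=0$, matching exactly the hypotheses of Lemma~\ref{lem:LMIcond}.

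Next, I would substitute these decompositions into $\tilde{A}(k)$ block-by-block and collect the constant, $p_1$-proportional, and $p_2$-proportional contributions. Crucially, because $\theta$ appears only in the second block-row of $\tilde{A}(k)$ and $\phi$ only in the third, no $p_1 p_2$ cross-terms arise, and the affine ansatz of the lemma is in fact exact. For instance, $\theta A = \mu_\theta A - \mu_\theta p_1 A$ and $(1-\theta)A = (1-\mu_\theta) A + \mu_\theta p_1 A$ contribute $\mu_\theta A$ and $(1-\mu_\theta) A$ to the $(2,1)$ and $(2,2)$ blocks of $\tilde{A}_0$, and $-\mu_\theta A$ and $\mu_\theta A$ to the corresponding blocks of $\tilde{A}_1$; the analogous expansion for $\phi$ in the third block-row yields $\tilde{A}_2$. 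Reassembling these contributions reproduces exactly the matrices $\tilde{A}_0, \tilde{A}_1, \tilde{A}_2$ stated in the theorem.

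With the canonical decomposition in hand, Lemma~\ref{lem:LMIcond} delivers the claimed "if and only if" characterization at once. The principal obstacle is therefore purely clerical, namely the careful block-matrix bookkeeping needed to confirm that every entry of $\tilde{A}(k)$ is accounted for in $\tilde{A}_0 + \tilde{A}_1 p_1 + \tilde{A}_2 p_2$; conceptually, nothing beyond the already established LMI characterization of mean-square stability is required.
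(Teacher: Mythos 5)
Your proposal is correct and follows essentially the same route as the paper: the centering $p_1(k) = (\mu_\theta - \theta(k))/\mu_\theta$ is exactly the paper's transformation $\theta(k) = \mu_\theta(1-\delta_\theta(k))$ with $\delta_\theta(k)\in\{1,\,1-\sfrac{1}{\mu_\theta}\}$, and the rest is the same block-wise collection of terms followed by an application of Lemma~\ref{lem:LMIcond}. Your explicit verification of the mean, variance, and cross-moment conditions is a slightly more complete write-up than the paper's, but not a different argument.
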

\begin{proof}
For clarity, we reintroduce time indices for $\theta$ and $\phi$ in this proof.
Following a similar approach as in~\cite{Rich2015}, we transform $\theta(k)$ as $\theta(k) = \mu_\theta\left(1-\delta_\theta(k)\right) $ with the new binary random variable
$\delta_\theta(k)\in\{1,1-\sfrac{1}{\mu_\theta}\}$ with
$\mathbb{P}[\delta_\theta(k)=1] = 1-\mu_\theta$ and $\mathbb{P}[\delta_\theta(k)=1-\sfrac{1}{\mu_\theta}] = \mu_\theta$; and analogously for $\phi(k)$ and $\delta_\phi(k)$.
We thus have that $\delta_\theta(k)$ is \iid (because $\theta$ is \iid) with $\E[\delta_\theta(k)] = 0$ and $\Var[\delta_\theta(k)] = \sigma_{p_1}^2$, and similarly for $\delta_\phi(k)$.  
Employing this transformation, $\tilde{A}(k)$ in \eqref{eqn:matrix_repr} is rewritten as $\tilde{A}(k) = \tilde{A}_0 + \sum_{i=1}^2 \tilde{A}_i p_i(k)$ with $p_1(k) = \delta_\theta(k)$, $p_2(k)=\delta_\phi(k)$, and $\tilde{A}_i$ as stated above.  
Thus,  all properties of \eqref{eqn:gen_sys_noise} are satisfied, and Lemma \ref{lem:LMIcond} yields the result.
\end{proof}

Using Theorem~\ref{thm:MSSourSystem}, we can analyze stability for any concrete physical system
\eqref{eqn:sys_complete} (noise-free), a state-feedback controller $F$, and probabilities $\mu_\theta$ and $\mu_\phi$.
Note that the matrices $A$, $B$, and $F$ depend on the length of a discrete time step $k\to k+1$ and thus account for the temporal dynamics of the physical system and the control loop (\ie update interval \Tupdate).
Searching for a $P>0$ that satisfies the LMI~\eqref{eqn:LMI} can be done using efficient numerical tools based on convex optimization (\eg \cite{Labit2002}).  

As it turns out, if such a $P$ is found, 
this also implies stability
for the system defined in~\eqref{eqn:sys_complete}--\eqref{eqn:pred_inp} (with noise), as we state in the following theorem (proof is given in \secref{sec:noise}):
\begin{theo}
\label{thm:MSSourSysNoise}
The system defined in~\eqref{eqn:sys_complete}--\eqref{eqn:pred_inp} is MSS in the sense of Definition~\ref{def:MSS} if the conditions of Theorem~\ref{thm:MSSourSystem} are fulfilled.
\end{theo}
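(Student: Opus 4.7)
The plan is to reduce the noisy system to the additive-noise form \eqref{eqn:gen_sys_noise} and then show that the LMI from Theorem~\ref{thm:MSSourSystem}, which already certifies a Lyapunov contraction of the noise-free state correlation, forces the noisy state correlation to remain bounded by an affine recursion argument.

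First I would rewrite the closed-loop dynamics \eqref{eqn:sys_complete}--\eqref{eqn:pred_inp} in the augmented form $z(k+1) = \tilde{A}(k) z(k) + \tilde{E}(k)\epsilon(k)$ with $z(k) = (x(k),\hat{x}(k),u(k),\hat{u}(k))^\mathrm{T}$ and $\epsilon(k) = (v(k),w(k))^\mathrm{T}$. The matrix $\tilde{A}(k)$ is exactly the one in \eqref{eqn:matrix_repr}; process noise $v(k)$ enters the $x$-component and the noise $w(k)$ enters the $\hat{x}$-component through $\theta A w(k-1)$. Applying the same transformation $\theta = \mu_\theta(1-\delta_\theta)$, $\phi = \mu_\phi(1-\delta_\phi)$ as in the proof of Theorem~\ref{thm:MSSourSystem} places the system in the form \eqref{eqn:gen_sys_noise} with the same $\tilde{A}_0,\tilde{A}_1,\tilde{A}_2$, and with $\tilde{E}(k) = \tilde{E}_0 + \tilde{E}_1 p_1(k)$ for suitable constant matrices. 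The noise $\epsilon(k)$ is Gaussian, zero-mean, \iid, and independent of the Bernoulli variables $\theta,\phi$ and of $z(k)$, so $\E[\epsilon(k)]=0$, $\E[\epsilon(k)\epsilon^\mathrm{T}(k)]=W$ with $W = \mathrm{blkdiag}(\Sigma_{\mathrm{proc}},\Sigma_{\mathrm{meas}})$.

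Next I would derive the Lyapunov recursion for $M(k) := \E[z(k)z^\mathrm{T}(k)]$. By independence and zero mean of $\epsilon(k)$ and of the $p_i(k)$, all cross-terms vanish and
\begin{equation*}
M(k+1) = \mathcal{L}(M(k)) + N,
\end{equation*}
where the linear map $\mathcal{L}(M) := \tilde{A}_0 M \tilde{A}_0^\mathrm{T} + \sum_{i=1}^{2}\sigma_{p_i}^2 \tilde{A}_i M \tilde{A}_i^\mathrm{T}$ is exactly the operator whose dynamics Lemma~\ref{lem:LMIcond} characterizes, and $N := \tilde{E}_0 W \tilde{E}_0^\mathrm{T} + \sigma_{p_1}^2 \tilde{E}_1 W \tilde{E}_1^\mathrm{T} \succeq 0$ is a constant.

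The crucial step is to turn the LMI of Theorem~\ref{thm:MSSourSystem} into a strict contraction of the Lyapunov functional $V(k) := \mathrm{Tr}(P M(k))$. Since $\mathrm{Tr}(P \mathcal{L}(M)) = \mathrm{Tr}\bigl((\tilde{A}_0^\mathrm{T} P \tilde{A}_0 + \sum_i \sigma_{p_i}^2 \tilde{A}_i^\mathrm{T} P \tilde{A}_i) M\bigr)$, the LMI $\tilde{A}_0^\mathrm{T} P \tilde{A}_0 - P + \sum_i \sigma_{p_i}^2 \tilde{A}_i^\mathrm{T} P \tilde{A}_i \prec 0$ gives a positive definite slack $Q \succ 0$ with $Q \succeq cP$ for some $c>0$ (take $c := \lambda_{\min}(P^{-1/2} Q P^{-1/2})$). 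For any $M \succeq 0$, this yields $\mathrm{Tr}(QM) \geq c\,\mathrm{Tr}(PM)$, so with $\alpha := 1-c \in (0,1)$ we obtain $\mathrm{Tr}(P\mathcal{L}(M)) \leq \alpha \mathrm{Tr}(PM)$. Combining with the recursion,
\begin{equation*}
V(k+1) \leq \alpha V(k) + \mathrm{Tr}(PN),
\end{equation*}
hence $V(k) \leq \alpha^k V(0) + \mathrm{Tr}(PN)/(1-\alpha)$. Because $P \succ 0$ and $M(k) \succeq 0$, boundedness of $V(k)$ implies boundedness of $M(k)$. In fact the affine operator $M \mapsto \mathcal{L}(M) + N$ is a contraction on the cone of positive semidefinite matrices in the norm induced by $P$, so $M(k)$ converges to the unique fixed point $M^\star = (\mathcal{I} - \mathcal{L})^{-1} N$, giving $\lim_{k\to\infty} M(k) < \infty$ in the sense of Definition~\ref{def:MSS}.

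The main obstacle I anticipate is the bookkeeping in the first step: carefully tracking how $\theta$ enters the noise matrix $\tilde{E}(k)$ (because $w(k-1)$ is multiplied by $\theta$), verifying that $\epsilon(k)$ is indeed independent of $z(k)$ and of $p_1(k),p_2(k)$ at the same time index (so that the cross-terms in $\E[z(k+1)z^\mathrm{T}(k+1)]$ genuinely vanish), and ensuring that the noise contribution from the $p_1$-dependent part of $\tilde{E}(k)$ is still captured by a single constant matrix $N$. Once the recursion is in the clean affine form, the Lyapunov contraction argument is routine.
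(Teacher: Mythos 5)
Your proposal is correct, and its skeleton coincides with the paper's: you form the same augmented system with noise input matrix $\tilde{E}(k)$ (capturing that $w$ enters the predictor only when $\theta=1$), and you arrive at the same affine correlation recursion $M(k+1)=\Gamma(M(k))+\hat{W}$ with the same stochastic linear map and the same constant noise term. Where you diverge is in how you close the argument. The paper unrolls the recursion explicitly, introduces the solution $\bar{W}$ of the Stein-type equation $\hat{W}=\bar{W}-\Gamma(\bar{W})$ (which exists because $\Gamma$ is a stable map by the noise-free MSS), telescopes the resulting series, and identifies the limit exactly as $\bar{W}$. You instead use the LMI certificate $P$ from Theorem~\ref{thm:MSSourSystem} as a Lyapunov weight for the scalar functional $\Tr(PM(k))$, extract a contraction factor $\alpha=1-\lambda_{\min}(P^{-1/2}QP^{-1/2})\in[0,1)$ directly from the LMI slack $Q$, and conclude boundedness from a scalar affine recursion. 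Your route is more self-contained in that it produces an explicit decay rate from the LMI without invoking the series solution; the paper's route buys the exact limit $\bar{W}$ for free. One point to tighten: Definition~\ref{def:MSS} asks for $\lim_{k\to\infty}M(k)<\infty$, i.e., existence of the limit, not merely boundedness, so your final sentence about convergence to the fixed point $M^\star=(\mathcal{I}-\mathcal{L})^{-1}N$ is doing real work. The phrase ``contraction on the cone of positive semidefinite matrices in the norm induced by $P$'' is imprecise ($M\mapsto\Tr(PM)$ is not a norm on symmetric matrices, and differences of iterates need not stay in the cone); the clean justification is that the noise-free MSS forces the spectral radius of $\mathcal{L}$ on the space of symmetric matrices to be strictly less than one, so $(\mathcal{I}-\mathcal{L})^{-1}$ exists and $M(k)\to M^\star$ --- which is exactly the fact the paper invokes when asserting that $\bar{W}$ exists because ``the linear map $\Gamma$ is stable.''
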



\subsection{Stability Analysis: Remote Stabilization with Mode Changes}
\label{sec:ctrl_mode_changes}
To account for the required adaptability in \cps applications, the wireless embedded system includes support for dynamically changing between different modes at runtime, as stated in \textbf{P4}.
We now extend the stability analysis from the last section to guarantee stability in the presence of mode changes.
From the perspective of control, mode changes correspond to the dynamic system in \eqref{eqn:gen_sys_noise} switching between different modes, which thus becomes 
a stochastic \emph{switched linear system}. 

It is well known~\cite{lin2009stability} that even if each subsystem is stable individually, a switched linear system can, in general, become unstable 
under arbitrary switching.
This can be seen as follows:
An asymptotically stable linear system approaches its equilibrium at $x=0$ for any initial condition $x(0)$.
This, however, does not necessarily happen monotonically.
During the transient behavior in the beginning, the system state may also (and often does) grow.
Thus, switching repeatedly at the wrong moments might lead to the system state growing without bounds. 
Therefore, we have to enhance the stability analysis from \secref{sec:stabAnalysis} to prove stability also under mode changes.


To account for mode changes, we extend the system description in~\eqref{eqn:gen_sys_noise} as follows
\begin{align}
\label{eqn:switched_sys}
z(k+1) = \tilde{A}_{\sigma(k)}(k)\, z(k) + \tilde{E}\,\epsilon(k),
\end{align}
where $\sigma(k)$ is the \emph{switching signal} taking values from the finite set $\mathcal{F}=\{1,\ldots,N\}$, with $N>1$ the number of modes.
As described in \secref{sec:overview}, different modes may correspond to, for instance, different number and dynamics of physical systems, different controllers, \etc
Because of the way $\tilde{A}$ is constructed, such changes are captured by the different matrices $\tilde{A}_{\sigma(k)}(k)$ in \eqref{eqn:switched_sys}.
If the delay requirements or the amount of messages that need to be communicated per round change from one mode to the next, the update interval \Tupdate changes as well.
This means that for different modes the length of a discrete time-step $k\to k+1$ can be different.
When staying in a mode, however, the length of a discrete time-step remains constant (\textbf{P1}).
As the results stated in Lemma~\ref{lem:LMIcond} do not rely on a constant time-step, we can still use them for the further analysis.

In the analysis, we assume that switching is instantaneous.
As per property \textbf{P4}, however, there is always a short dead time when switching to a new mode.
We neglect this for the theoretical analysis and experimentally investigate its effect on control performance and stability in \secref{sec:eval_dwell_time}.

Only in special cases (\eg if the matrices $\tilde{A}_{\sigma(k)}(k)$ commute) stability under arbitrary switching signals $\sigma(k)$ can be guaranteed (see, \eg~\cite{zhai2002qualitative}).
For general systems, different conditions have been established for stability under switching. 
For example, if a system stays in each mode ``long enough,'' stability can be proven \cite{morse1996supervisory,hespanha1999stability}.  
The time a system stays in a mode is called \emph{dwell time}.
Stability can be guaranteed if the dwell time is larger than or equal to some threshold.
In fact, it is sufficient if the system respects this threshold on average.
This is captured by the notion of \emph{average dwell time}:
\begin{defi}[\cite{zhang2008exponential}]
\label{def:avg_dwell_time}
For each switching signal $\sigma(k)$ and any $k_\mathrm{e}>k_\mathrm{s}>k_0$, let $N_{\sigma(k)}(k_\mathrm{s},k_\mathrm{e})$ be the number of switches of $\sigma(k)$ over the interval $[k_\mathrm{s},k_\mathrm{e}]$.
If for any given $N_0>0$, $\tau_\mathrm{a}>0$, we have $N_{\sigma(k)}(k_\mathrm{s},k_\mathrm{e})\leq N_0+(k_\mathrm{e}-k_\mathrm{s})/\tau_\mathrm{a}$, then $\tau_\mathrm{a}$ and $N_0$ are called \emph{average dwell time} and \emph{chatter bound}. 
\end{defi}

We can then give a lower bound on the average dwell time and have the stability guarantee if the switching signal respects this lower bound.
\begin{theo}
\label{thm:switch_system_high_level}
There exists a minimum average dwell time $\tau_\mathrm{a}^*$.
The switched system defined in~\eqref{eqn:switched_sys} is MSS in the sense of Definition~\ref{def:MSS} if the conditions of Theorem~\ref{thm:MSSourSysNoise} hold and the switching signal $\sigma(k)$ is constructed such that $\tau_\mathrm{a}\ge \tau_\mathrm{a}^*$.
\end{theo}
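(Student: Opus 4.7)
The plan is to combine the per-mode mean-square stability guarantee from Theorem~\ref{thm:MSSourSysNoise} with a multiple-Lyapunov-function / average dwell time argument, adapted from the deterministic switched-systems literature to our stochastic setting. For each mode $\sigma\in\mathcal{F}$, Theorem~\ref{thm:MSSourSysNoise} (via Lemma~\ref{lem:LMIcond}) yields a $P_\sigma>0$ satisfying the LMI for the mode-$\sigma$ matrices $\tilde{A}_{i,\sigma}$. I would associate a Lyapunov candidate $V_\sigma(z)=z^\mathrm{T} P_\sigma z$ with each mode, and exploit the strict LMI to obtain $\lambda_\sigma\in(0,1)$ with
\[
\tilde{A}_{0,\sigma}^\mathrm{T} P_\sigma \tilde{A}_{0,\sigma} + \sum_i \sigma_{p_i}^2\, \tilde{A}_{i,\sigma}^\mathrm{T} P_\sigma \tilde{A}_{i,\sigma} \le \lambda_\sigma P_\sigma.
\]
Setting $\lambda := \max_\sigma \lambda_\sigma < 1$, a direct conditional expectation using $\E[p_i(k)]=0$ and $\E[p_i(k)p_j(k)]=0$ for $i\neq j$ yields, as long as the system stays in mode $\sigma$,
\[
\E\!\left[V_\sigma(z(k+1))\mid z(k)\right]\le \lambda\, V_\sigma(z(k)) + c,
\]
where $c$ is a uniform upper bound on $\Tr(\tilde{E}^\mathrm{T} P_\sigma \tilde{E} W)$ across modes.

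Next, I would quantify the Lyapunov jump at a switch $\sigma\to\sigma'$: since all $P_\sigma$ are uniformly positive definite, there exists $\mu:=\max_{\sigma,\sigma'}\lambda_{\max}(P_{\sigma'})/\lambda_{\min}(P_\sigma)\ge 1$ with $V_{\sigma'}(z)\le \mu\, V_\sigma(z)$ for every $z$. Iterating the one-step bound between switches and accounting for $\mu$-jumps, over an interval $[k_0,k]$ with $N:=N_{\sigma(k)}(k_0,k)$ switches I would obtain
\[
\E\!\left[V_{\sigma(k)}(z(k))\right] \le \mu^{N}\lambda^{k-k_0}\E\!\left[V_{\sigma(k_0)}(z(k_0))\right] + c\sum_{j=0}^{k-k_0-1}\mu^{N_{\sigma(k)}(j,k)}\lambda^{k-1-j}.
\]

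Invoking Definition~\ref{def:avg_dwell_time}, $N_{\sigma(k)}(j,k)\le N_0+(k-j)/\tau_\mathrm{a}$, so both the transient prefactor and each summand in the noise series are bounded by $\mu^{N_0}(\mu^{1/\tau_\mathrm{a}}\lambda)^{\cdot}$. Defining
\[
\tau_\mathrm{a}^* := \frac{\ln \mu}{\ln(1/\lambda)},
\]
any $\tau_\mathrm{a}>\tau_\mathrm{a}^*$ yields $\mu^{1/\tau_\mathrm{a}}\lambda<1$, so the transient decays geometrically and the noise series is a convergent geometric sum. Consequently $\E[V_{\sigma(k)}(z(k))]$ is bounded uniformly in $k$, and since $V_\sigma$ is quadratic with uniformly positive-definite weights, $M(k)=\E[z(k)z^\mathrm{T}(k)]$ is bounded, establishing MSS in the sense of Definition~\ref{def:MSS}.

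The main obstacle I expect is the stochastic treatment of the per-mode one-step decrease in combination with mode switching: unlike the deterministic switched-systems case, the Lyapunov inequality only holds in conditional expectation with respect to the message-loss variables $\theta,\phi$, so iterating it across a random trajectory requires a careful tower-property argument that cleanly separates the $p_i$-randomness (which drives the per-mode LMI) from the noise $\epsilon$ and from the switching signal $\sigma(k)$ (which is assumed exogenous or at least independent of the future noise realizations). A secondary technicality is that Definition~\ref{def:avg_dwell_time} permits an initial burst of up to $N_0$ switches; this only inflates the final bound by the constant $\mu^{N_0}$ and does not affect boundedness, but it must be carried through the estimate. The dead time at each mode change (property \textbf{P4}) is neglected in the analysis, consistent with the statement of the theorem, and is instead studied empirically in \secref{sec:eval_dwell_time}.
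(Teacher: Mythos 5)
Your proof is correct and rests on the same core machinery as the paper's --- multiple Lyapunov functions plus an average-dwell-time bound of the form $\tau_\mathrm{a}^*=\ln\mu/\ln(1/\lambda)$ --- but it executes the argument at a different level. The paper observes that the state correlation matrix $M(k)$ evolves \emph{deterministically} via the recursion in the appendix, applies the deterministic switched-system result of Lemma~\ref{lem:stability_avg_dwell_time} (from \cite{zhang2008exponential}) to that recursion in the noise-free case to get $\lim_{k\to\infty}M(k)=0$, and then re-imports the noise through Theorem~\ref{thm:MSSourSysNoise}/Lemma~\ref{lem:MSS_noise}. You instead run a direct stochastic Lyapunov iteration on the state $z(k)$, using the tower property to turn the per-mode LMI into a conditional-expectation contraction $\E[V_\sigma(z(k+1))\mid z(k)]\le\lambda V_\sigma(z(k))+c$ and carrying the noise as an additive constant through the geometric sum. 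Your route buys two things: it handles the noise in one pass rather than via the separate noise-free-to-noisy lemma, and it extracts the uniform decay rate $\lambda_\sigma<1$ explicitly from the \emph{strict} LMI, whereas the paper's proof leans on the looser claim that a ``monotonically decreasing Lyapunov function'' yields an $\alpha$ satisfying the decrease condition --- monotonicity alone does not give a uniform geometric rate, so your derivation is actually the more careful one at that step. The paper's route buys brevity, since the whole dwell-time bookkeeping is outsourced to the cited lemma. One shared minor gap: at $\tau_\mathrm{a}=\tau_\mathrm{a}^*$ exactly one gets $\mu^{1/\tau_\mathrm{a}}\lambda=1$ and the noise series need not converge; the paper's $\mathrm{ceil}(\cdot)$ typically (but not always) restores strict inequality, and your argument should likewise state $\tau_\mathrm{a}>\tau_\mathrm{a}^*$ or round up. Your flagged concerns (exogeneity of $\sigma(k)$, the $\mu^{N_0}$ chatter factor) are exactly the right technicalities and are handled correctly.
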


The proof of Theorem~\ref{thm:switch_system_high_level} in the supplementary material derives a bound on the worst-case growth of the system state when switching between modes and a minimum decay rate once the system stays in a mode.
By staying in a mode for long enough on average, we ensure to account for the potential growth during switching and thus guarantee the state of the system to approach its equilibrium. 
In the considered test scenario in \secref{sec:eval_mode_changes}, the dwell times are found to be not overly restrictive.
Moreover, the dwell times only need to be respected \emph{on average}.
The benefit of this can for instance be seen in a scenario, where switches can happen due to external events and be commanded manually.
If during a certain period of time switches faster than the dwell time are necessary because of external events, this can be accounted for by reducing the frequency of manual switches.
Then it may still be possible to respect the dwell time on average.

With Theorem~\ref{thm:switch_system_high_level}, we have the stability guarantee for all cases captured by properties \textbf{P1}--\textbf{P5} of the wireless embedded system and thus achieve goal \textbf{G4}.

\subsection{Multi-agent Synchronization}
\label{sec:sync}

In distributed or decentralized control architectures, different controllers have access to different measurements and inputs,
and thus, in general, different information.  This
is the core reason for why such architectures are more challenging than centralized ones \cite{Lunze1992,Grover2014}.  
For instance, an agent may only be able to communicate point-to-point with its nearest neighbors, or with other agents in a certain range.
Property \textbf{P3} of the wireless embedded system offers a key advantage compared to these structures because every agent in the network has access to all information (except for rare message loss).
We can thus carry out a centralized design, but implement the resulting controllers in a distributed fashion.
Such schemes have been used before for wired-bus networks (\eg in \cite{Trimpe2012}).  Here, we present synchronization of multiple physical systems as an \emph{example} of how distributed control tasks can easily be achieved with the proposed wireless control system, thus achieving \textbf{G6}.

The problem we consider is shown in \figref{fig:WirelessControlModel} (right).
We assume multiple physical processes as in~\eqref{eqn:sys_complete}, but with possibly different dynamics parameters ($A_i$, $B_i$, \etc).
We understand synchronization in this setting as the goal of having the system state of different agents evolve together as close as possible.
That is, we want to keep the error $x_i(k)-x_j(k)$ between the states of systems $i$ and $j$ small.
Instead of synchronizing the whole state vector, also a subset of all states can be considered.
Synchronization of multi-agent systems is a common problem and also occurs under the terms
consensus or coordination~\cite{Lunze2012}.
For simplicity, we consider synchronization of two agents in the following, but the approach directly extends to more than two, as we demonstrate in \secref{sec:eval}.

We consider the architecture in \figref{fig:WirelessControlModel}, where each physical system is associated with a local controller that receives local observations directly and observations from other agents over the multi-hop wireless network.  We present an approach based on an optimal LQR \cite{Anderson2007} to design the synchronizing controllers.
We choose the quadratic cost function
\begin{equation}
\label{eqn:cost}
J = \lim_{K\to\infty}\frac{1}{K}\E\!\Big[\sum\limits_{k=0}^{K-1} \sum_{i=1}^2 \Big(x_i^\mathrm{T}\!(k)Q_i x_i(k) + u_i^\mathrm{T}\!(k)R_i u_i(k) \Big) + (x_1(k)-x_2(k))^\mathrm{T}Q_\text{sync} (x_1(k)-x_2(k)) \Big],
\end{equation}
which expresses our objective of keeping $x_1(k)-x_2(k)$ small (through the weight $Q_\text{sync}>0$) next to usual penalties on states ($Q_i>0$) and control inputs ($R_i>0$).
Using augmented state $\tilde{x}(k) = (x_1(k), x_2(k))^\mathrm{T}$ and input $\tilde{u}(k) = (u_1(k),$ $u_2(k))^\mathrm{T}$, the term in the summation over $k$ becomes 
\begin{align*}
\tilde{x}^\mathrm{T}\!(k)
\begin{pmatrix}
Q_1+Q_\text{sync}&-Q_\text{sync}\\
-Q_\text{sync}&Q_2+Q_\text{sync}
\end{pmatrix}
\tilde{x}(k)
 +\tilde{u}^\mathrm{T}(k)
 \begin{pmatrix}
R_1&0\\
0&R_2
\end{pmatrix}
 \tilde{u}(k).
\end{align*}
Thus, the problem is in standard LQR form and can be solved with standard tools \cite{Anderson2007}.  The optimal stabilizing controller that minimizes \eqref{eqn:cost} has the structure
$u_1(k) = F_{11} x_1(k) + F_{12} x_2(k)$ and $u_2(k) = F_{21} x_1(k) + F_{22} x_2(k)$;
that is, agent~1 ($u_1(k)$) requires state information from agent~2 ($x_2(k)$), and vice versa.
Because of many-to-all communication, the wireless embedded system directly supports this (as well as any other possible) controller structure (\textbf{P3}).

Since the controller now runs on the node that is collocated with the physical system, local measurements and control inputs are not sent over the wireless network, and the local sampling time can be shorter than the update interval \Tupdate, while measurements from other agents are still received over the wireless network every \Tupdate.
Although the analysis in Sections~\ref{sec:stabAnalysis} and~\ref{sec:ctrl_mode_changes} can be extended to the synchronization setting, a formal stability proof is beyond the scope of this paper.
In general, stability is less critical here because of shorter update intervals in the local control loop.


\section{Experimental Evaluation}
\label{sec:eval}

This section uses measurements from a cyber-physical testbed with 20 wireless devices forming a three-hop network and several cart-pole systems to study the performance of the proposed wireless \cps design and validate the theoretical results.
Our experiments reveal the following key findings:
\begin{itemize}
 \item We can concurrently and safely stabilize two inverted pendulums over a three-hop low-power wireless network, either via a single remote controller or two separate remote controllers.
 \item Using the same wireless \cps design with a different control logic, we can reliably synchronize the movement of five cart-poles thanks to the support for arbitrary traffic patterns.
 \item We can dynamically change between well-defined modes that involve different control and communication requirements, without impairing closed-loop stability or control performance.
 \item Our system can stabilize an inverted pendulum at update intervals of \SIrange[range-units=single, range-phrase=-]{20}{50}{\milli\second}. At an update interval of \SI{20}{\ms}, it can stabilize an inverted pendulum despite \SI{75}{\percent} \iid Bernoulli losses and bursts of 40 consecutive losses. Larger update intervals decrease the control performance and the ability to tolerate message loss, but allow for saving communication resources.
 \item We can stabilize an inverted pendulum despite alternating between modes every \SIrange[range-units=single, range-phrase=--]{120}{240}{\milli\second}, which is significantly shorter than the minimum average dwell time stipulated by Theorem~\ref{thm:switch_system_high_level}.
 \item The measured jitter on the update interval and the end-to-end delay is less than $\SI{\pm25}{\micro\second}$, which validates our theoretical analysis of the worst-case jitter from \secref{sec:properties}.
\end{itemize}

\subsection{Cyber-physical Testbed and Performance Metrics}
\label{sec:cps_testbed}

\begin{figure}
\begin{subfigure}[b]{0.495\linewidth}
	\centering
	\includegraphics[width=\linewidth]{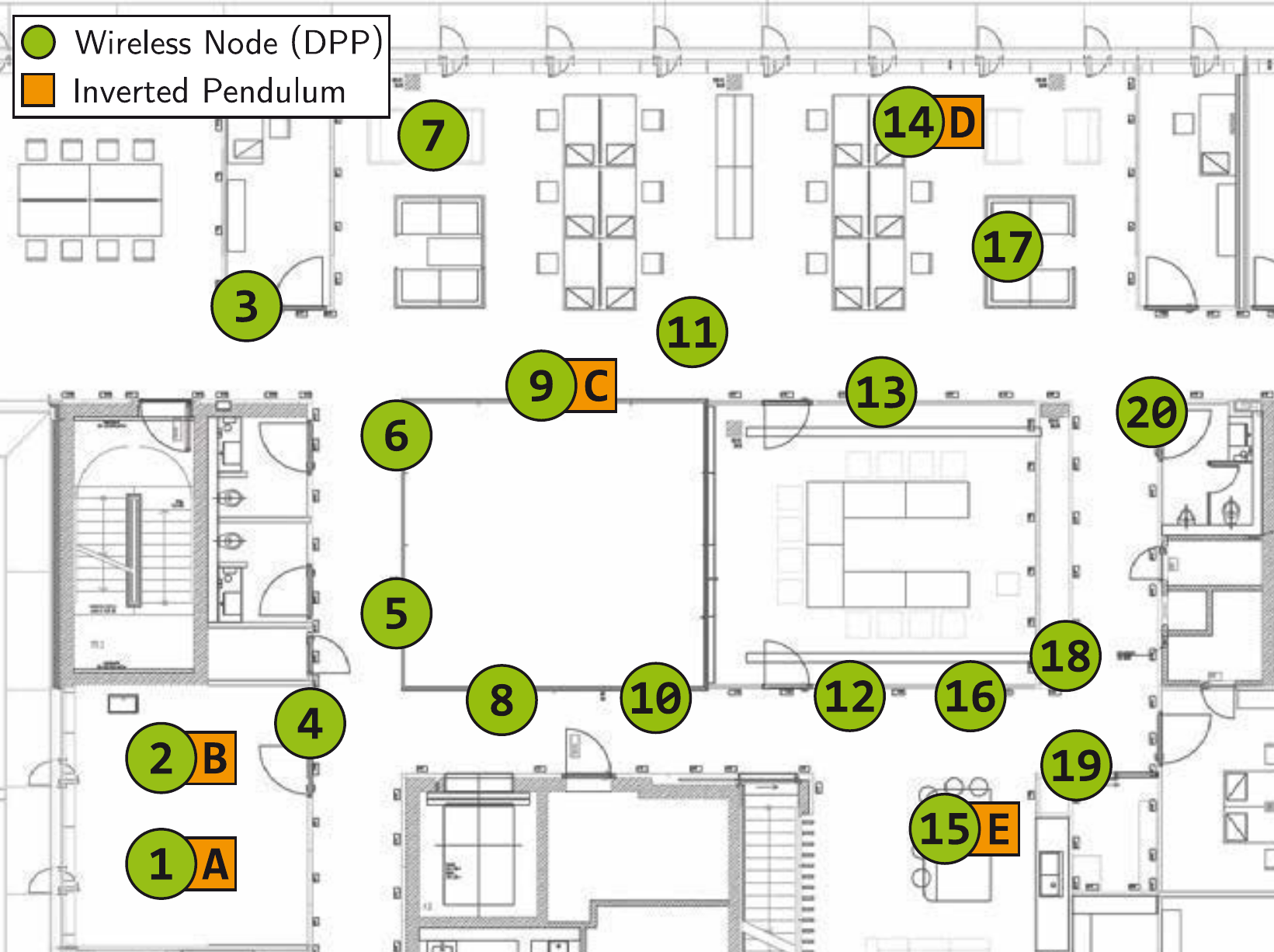}
	\caption{Layout of cyber-physical testbed.}
	\label{fig:testbed}
\end{subfigure}
\hfill
\begin{subfigure}[b]{0.495\linewidth}
	\centering
	\tikzsetnextfilename{cartPole}
	\begin{tikzpicture}[scale=0.7, every node/.style={scale=0.7}]]
	\tikzstyle{ground}=[fill,pattern=north east lines,draw=none,minimum width=5,minimum height=0.1]
	\tikzset{>=latex};
	\draw(0.5,0.35)rectangle(-0.5,-0.35);
	\draw[fill=white](-0.25,-0.5)circle(0.2);
	\draw[fill=white](0.25,-0.5)circle(0.2);
	\draw[ultra thick,lightgray](0,0)node(cart){}--(1,2.5);
	\draw[fill=black](1,2.5)circle(0.2)node(pole){};
	\draw[dashed] (0,0.5) -- (0,2.5)node(top){};
	\draw[fill=black](0,0)circle(0.05);
	\draw[fill=black](-0.25,-0.5)circle(0.025);
	\draw[fill=black](0.25,-0.5)circle(0.025);
	\node[ground,minimum width=200,anchor=north](floor)at(0,-0.7){};
	\draw(floor.north east)--(floor.north west);
	 \draw pic[" ",draw=black, -,  angle eccentricity=1.5,angle radius =
	  2cm] {angle = pole--cart--top};
	\draw[->](-1.5,1.5)node[above]{Cart} -- (-0.6,0.25);
	\draw[->] (2,1.5)node[right]{Pole} -- (1,2);
	\draw[->]([shift={(0,-0.2cm)}]floor.south west) -- node[midway,draw,rectangle,inner sep = 0, minimum width=0.1em, minimum height=0.4em,label=below:0,fill=black]{}
	node[pos=0.4,draw,rectangle,inner sep = 0, minimum width=0.1em, minimum height=0.4em,label=below:-5,fill=black]{}
	node[pos=0.3,draw,rectangle,inner sep = 0, minimum width=0.1em, minimum height=0.4em,label=below:-10,fill=black]{}
	node[pos=0.2,draw,rectangle,inner sep = 0, minimum width=0.1em, minimum height=0.4em,label=below:-15,fill=black]{}
	node[pos=0.1,draw,rectangle,inner sep = 0, minimum width=0.1em, minimum height=0.4em,label=below:-20,fill=black]{}
	node[pos=0,draw,rectangle,inner sep = 0, minimum width=0.1em, minimum height=0.4em,label=below:-25,fill=black]{}
	node[pos=0.6,draw,rectangle,inner sep = 0, minimum width=0.1em, minimum height=0.4em,label=below:5,fill=black]{}
	node[pos=0.7,draw,rectangle,inner sep = 0, minimum width=0.1em, minimum height=0.4em,label=below:10,fill=black]{}
	node[pos=0.8,draw,rectangle,inner sep = 0, minimum width=0.1em, minimum height=0.4em,label=below:15,fill=black]{}
	node[pos=0.9,draw,rectangle,inner sep = 0, minimum width=0.1em, minimum height=0.4em,label=below:20,fill=black]{}
	node[pos=1,draw,rectangle,inner sep = 0, minimum width=0.1em, minimum height=0.4em,label=below:25,fill=black]{}
	([shift={(0,-0.2cm)}]floor.south east);
	\draw[->] (-3,0.5)node[above]{Track} -- ([shift={(-2.5,0.1cm)}]floor.north);
	\node[below = 2.5em of floor,inner sep = 0]{Cart Position $s$ [\si{\centi\meter}]};
	\draw[->](1,0.75)node[below right]{Pole Angle $\theta$ [\si{\degree}]} -- (0.25,1.5);
	\end{tikzpicture}
	\caption{Schematic of cart-pole system.}
	\label{fig:pendulum}
\end{subfigure}
\caption{Cyber-physical testbed with 20 \dpp nodes forming a three-hop wireless network and five cart-pole systems, two real ones connected to nodes 1 and 2 and three simulated ones running on nodes 9, 14, and 15.}
\label{fig:testbed_overall}
\end{figure}

Realistic cyber-physical testbeds are essential for the validation and evaluation of \cps solutions~\cite{Lu2016a,Baumann2018}.
With the goal of capturing the requirements of a large class of emerging \cps applications~\cite{Akerberg2011,Luvisotto2017}, we develop the wireless cyber-physical testbed shown in \figref{fig:testbed_overall}.
The testbed is deployed in an office building across an area of \SI{15}{\m} by \SI{20}{\m}, as illustrated in \figref{fig:testbed}.
It consists of 20 \dpp nodes (see \secref{subsec:embeddedHardware}), two real physical systems (A and B), and three simulated physical systems (C, D, and E).
All \dpp nodes transmit at \SI{10}{\dBm}, which results in a network diameter of three hops.
The wireless signals need to penetrate various types of walls, from glass to reinforced concrete, and are exposed to different sources of interference from other electronics and human activity.

We use \emph{cart-pole systems} as physical systems.
As shown in \figref{fig:pendulum}, a cart-pole system consists of a cart that can move horizontally on a track and a pole attached to it via a revolute joint.
The cart is equipped with a DC motor that can be controlled by applying a voltage to influence the speed and the direction of the cart.
Moving the cart exerts a force on the pole and thus influences the pole angle~$\theta$.
In this way, the pole can be stabilized in an upright position around $\theta=\SI{0}{\degree}$, which represents an unstable equilibrium and is called the \emph{inverted pendulum}.
The inverted pendulum has fast dynamics, which are typical of real-world mechanical systems~\cite{Boubaker2012}, and requires feedback with update intervals of tens of milliseconds.

For small deviations from the equilibrium (\ie $\sin(\theta) \approx \theta$), the inverted pendulum can be well approximated as an \lti system.
The state $x(k)$ of the system consists of four variables.
Two of them, the pole angle~$\theta(k)$ and the cart position~$s(k)$, are directly measured by angle sensors.
Their derivatives, the angular velocity~$\dot{\theta}(k)$ and the cart velocity~$\dot{s}(k)$, are estimated using finite differences and low-pass filtering.
The voltage applied to the motor is the control input $u(k)$.
In this way, the APs of \dpp nodes 1 and 2 interact with the two real pendulums A and B, while the APs of nodes 9, 14, and 15 run a simulation model of the inverted pendulum, labelled as C, D, and E in \figref{fig:testbed}.

The cart-pole system has a few constraints.
Control inputs are capped at \SI{\pm10}{\V}.
The track has a length of \SI{\pm25}{\cm} from the center (see \figref{fig:pendulum}).
Surpassing the track limits ends an experiment.
Before each experiment, we move the carts to the center and the poles in the upright position; then the controller takes over.
\secref{sec:controller_implementation} details the implementation of the controllers for multi-hop stabilization and multi-hop synchronization, following the design outlined in \mbox{Sections~\ref{sec:ctrlDesign} and~\ref{sec:sync}}.

We measure the control performance in terms of pole angle, cart position, and control input.
Furthermore, we measure the nodes' radio duty cycle in software, which can be considered the communication costs of feedback control, and record when a message is lost over wireless.

\db{The same setup has also been used in larger environments, see for instance~\cite{mager2019b} (\url{https://youtu.be/AtULmfGkVCE}).}
\mz{For now, I discuss this in the conclusions. We can discuss if this is what we want.}
  
\subsection{Multi-hop Stabilization}
\label{sec:multihop_stabilization}

In our first experiment, we study the feasibility and performance of fast feedback control over a real multi-hop low-power wireless network, thereby validating the theoretical results from \secref{sec:stabAnalysis}.

%
%
\begin{figure}
\begin{subfigure}[t]{0.495\linewidth}
    \centering
    \includegraphics[width=\textwidth]{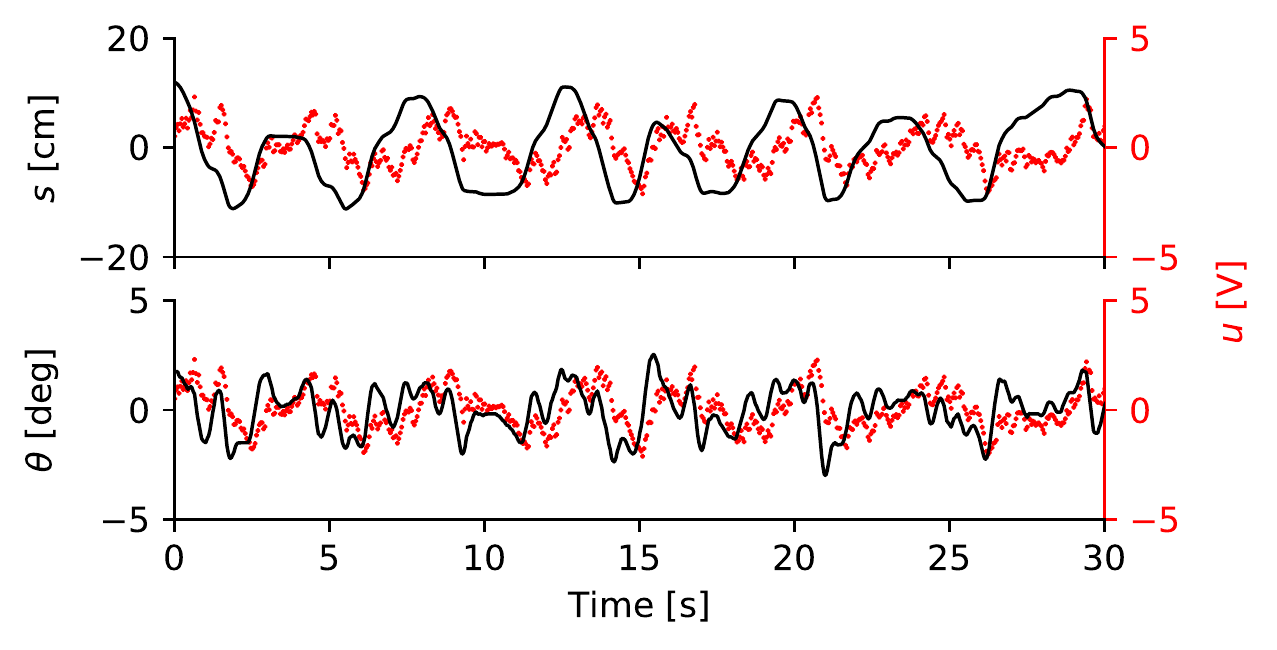}
    \subcaption{Pendulum A.}
    \label{fig:stabilization_pendA}
\end{subfigure}
\begin{subfigure}[t]{0.495\linewidth}
    \centering
    \includegraphics[width=\textwidth]{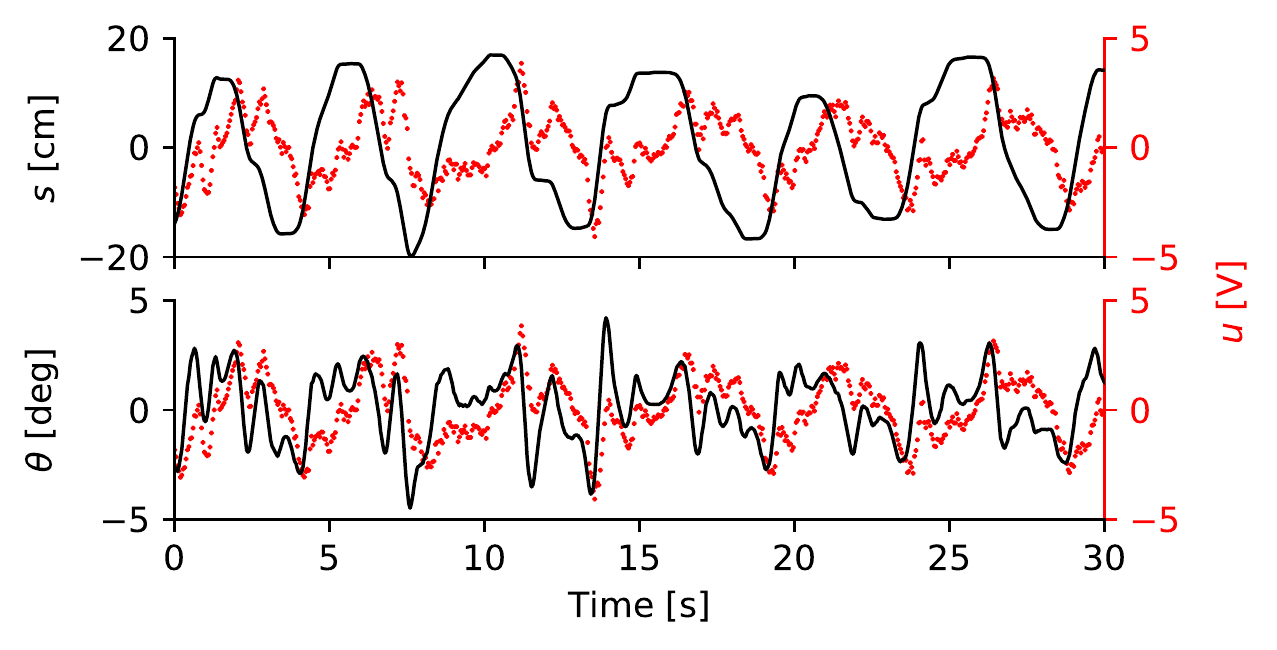}
    \subcaption{Pendulum B.}
    \label{fig:stabilization_pendB}
\end{subfigure}
\caption{Cart position $s$, pole angle $\theta$, and control input $u$ of two real inverted pendulums when concurrently stabilizing them over the same multi-hop wireless network at an update interval of 45\ms. \capt{The cart position and the pole angle always stay within safe regimes, and less than half of the possible control input is needed.}}
\label{fig:stabilization}
\end{figure}

\fakepar{Setup}
We use two controllers running on nodes 14 and 15 to stabilize the two real pendulums A and B at $\theta = \SI{0}{\degree}$ and $s = \SI{0}{\cm}$.
Hence, there are two independent control loops sharing the same wireless network, and it takes six hops to close each loop.
We configure the wireless embedded system and the controllers for an update interval of $\Tupdate = \SI{45}{\ms}$, and according to~\textbf{P2} (and confirmed by our measurements discussed below) we expect a message delivery rate of at least \SI{99.9}{\percent}.
For these settings, we can use Theorems~\ref{thm:MSSourSystem} and~\ref{thm:MSSourSysNoise} to \emph{formally prove} stability of the overall system.

\fakepar{Results}
Our experimental results \emph{empirically validate} the system design in the tested scenario: We can safely stabilize both pendulums over the three-hop wireless network.
\figref{fig:stabilization} shows a characteristic \SI{30}{\second} trace of the two pendulums.
Due to differences in the physical properties of the two pendulums, cart position, pole angle, and control input oscillate differently, but always stay within safe regimes.
For example, the carts never come very close to either end of their track, and less than half of the possible control input is applied.
Not a single message was lost in this experiment, which demonstrates the reliability of our wireless embedded system design.

\begin{figure}[t]
\begin{subfigure}{0.495\linewidth}
	\centering
	\includegraphics[width=\linewidth]{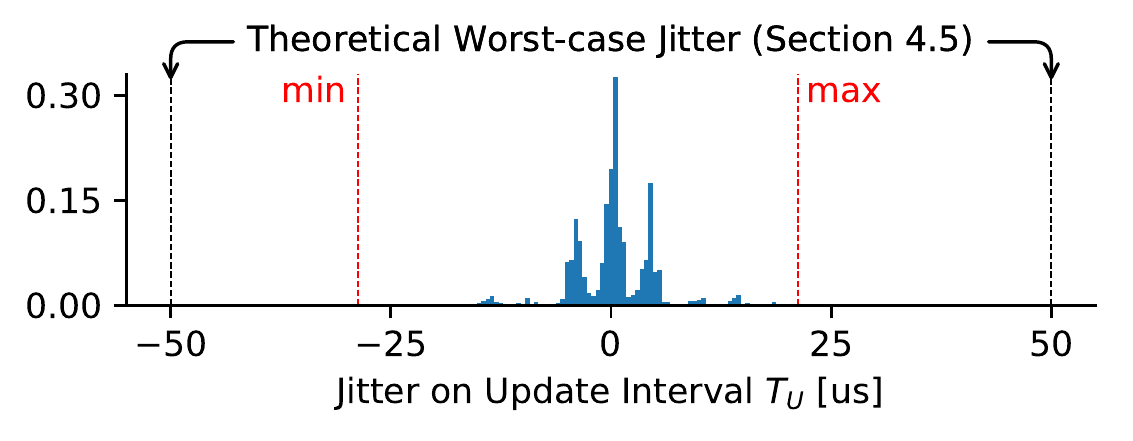}
	\caption{Jitter on update interval.}
	\label{fig:update_interval_jitter}
\end{subfigure}
\begin{subfigure}{0.495\linewidth}
	\centering
	\includegraphics[width=\linewidth]{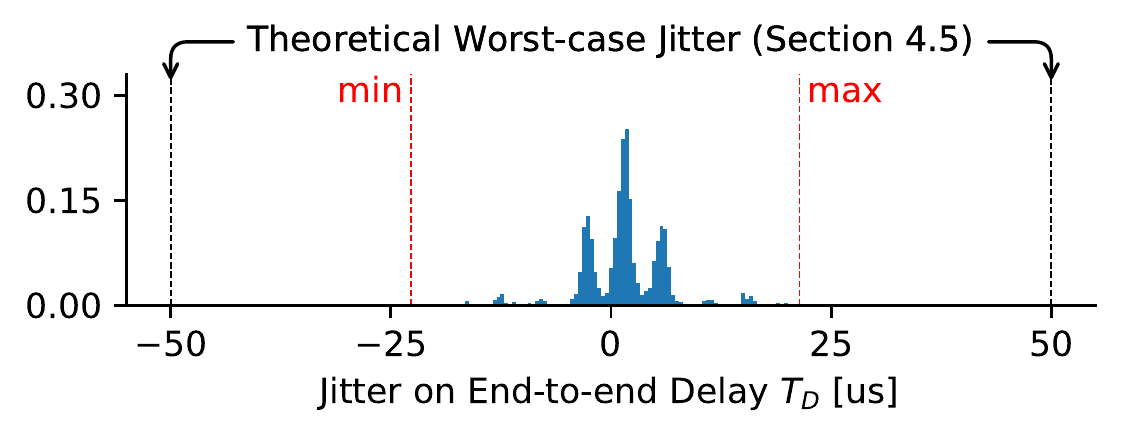}
	\caption{Jitter on end-to-end delay.}
	\label{fig:end_to_end_delay_jitter}
\end{subfigure}
\caption{Distributions of the measured jitter on the update interval \Tupdate and the end-to-end delay \Tdelay. \capt{The experimental measurements are well within the theoretical worst-case bounds determined in \secref{sec:properties}.}}
\label{fig:interval_jitter}
\end{figure}

During the same experiment, we use a logic analyzer to measure the update interval \Tupdate and the end-to-end delay \Tdelay (see \figref{fig:schedule}).
\figref{fig:interval_jitter} shows the distributions of the measured jitter on \Tupdate and \Tdelay.
We see that the empirical results are well within the theoretical worst-case bounds of $\pm\SI{50}{\micro\second}$, which validates our jitter analysis from \secref{sec:properties} and assumptions made in \secref{sec:control}.
Indeed, this jitter is several orders\db{we are three orders of magnitude faster than the TiSCH network, so why writing one here? I would probably either write several or three.} of magnitude smaller than the jitter of conventional approaches based on routing and per-link scheduling.
For example, Schindler \etal report that the \emph{communication} jitter alone (\ie neglecting time-varying processing delays, which would contribute to the \emph{end-to-end} jitter) is at least $\pm\SI{23}{\milli\second}$ in a \emph{single-hop} 6TiSCH network, which is an advancement of WirelessHART~\cite{Schindler2017}.
Unlike our approach, such jitter cannot be neglected as it is on par with the dynamics of the physical systems, complicating control design and stability analysis.

In a further experiment, we demonstrate that our design is indeed independent of the network topology (see \secref{sec:wireless_protocol}).
This independence allows us, for instance, to carry around the controller while balancing the pendulums, without any deterioration of the control performance.\footnote{A video of this experiment can be found at  \url{https://youtu.be/19xPHjnobkY}.}

\subsection{Multi-hop Synchronization}
\label{sec:synchronization}

To assess its versatility, we apply the same \cps design to the distributed control task from \secref{sec:sync}.

\fakepar{Setup}
We use the two real pendulums A and B and the three simulated pendulums C, D, and E.
The goal is to synchronize the cart positions of the five pendulums over the multi-hop wireless network, while each pendulum is stabilized  by a local control loop.
This scenario is similar to drone swarm coordination, where each drone stabilizes its flight locally, but exchanges its position with all other drones to keep a desired swarm formation~\cite{Preiss2017}.
In our experiment, stabilization runs with $\Tupdate = \SI{10}{\milli\second}$, and nodes 1, 2, 9, 14, and 15 exchange their current cart positions every \SI{50}{\milli\second}.

\begin{figure*}[!bt]
	\centering
	\includegraphics[width=\linewidth]{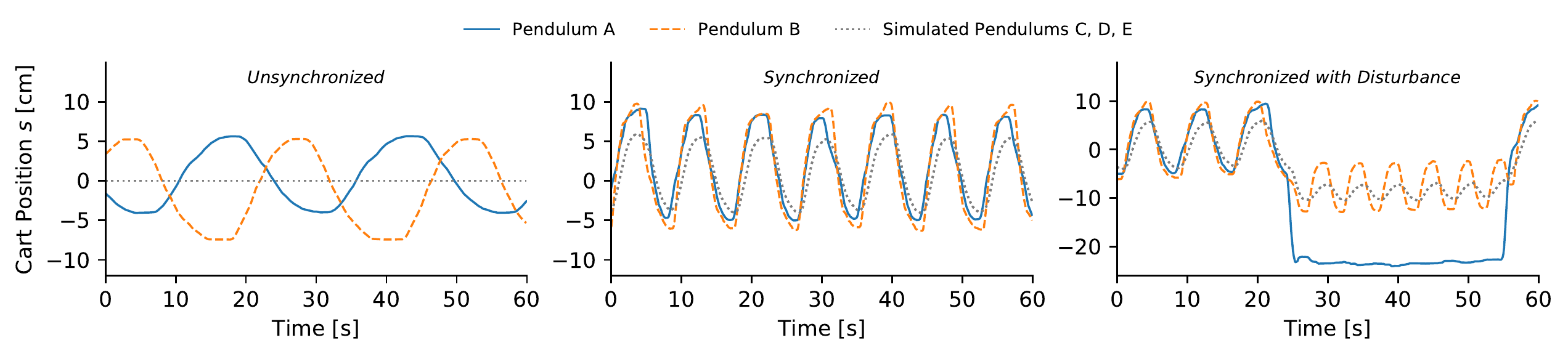}
	\caption{Cart position of five cart-pole systems over time as they locally stabilize their pole at an update interval of \SI{10}{\ms} and synchronize the cart positions (middle and right plot) over the multi-hop low-power wireless network at an update interval of 50\ms. \capt{With synchronization enabled, all three carts move in concert and even try to mimic the temporary disturbance of pendulum A (right plot).}}
	\label{fig:synchronization_position_time_plot}
\end{figure*}

\fakepar{Results}
The left plot in \figref{fig:synchronization_position_time_plot} shows the cart positions over time without synchronization.
We see that the carts of the real pendulums move with different amplitude, phase, and frequency due to slight differences in their physics and imperfect measurements.
The simulated pendulums, instead, are perfectly balanced and behave deterministically as they all start in the same initial state.

In the middle plot of \figref{fig:synchronization_position_time_plot}, we can observe the behavior of the pendulums when they synchronize their cart positions over the multi-hop wireless network.
Now, all five carts move in concert.
The movements are not perfectly aligned because, besides the synchronization, each cart also needs to locally stabilize its pole at $\theta = \SI{0}{\degree}$ and $s = \SI{0}{\cm}$.
Since no message is lost during the experiment, the simulated pendulums all receive the same state information and, therefore, show identical behavior.

This effect can also be seen in our third experiment, shown in the right plot of \figref{fig:synchronization_position_time_plot}, where we hold pendulum A for some time at $s = \SI{-20}{\cm}$.
The other pendulums now have two conflicting control goals: stabilization at $s=\SI{0}{\cm}$ and $\theta=\SI{0}{\degree}$, as well as synchronization while one pendulum is fixed at about $s = \SI{-20}{\cm}$.
As a result, they all move towards this position and oscillate between $s = 0$ and $s = \SI{-20}{\cm}$.
Clearly, this experiment demonstrates that the cart-pole systems influence each other, which is enabled by the many-to-all communication over the multi-hop wireless network.

\begin{table}[tb]
\addtolength{\tabcolsep}{2.6pt}
\caption{Modes considered in the experiment of \secref{sec:eval_mode_changes} including the update intervals used. In mode~2, the two real pendulums are remotely stabilized by one controller (ctrl) running on node~13. A second controller running on node~18 is used in mode~3 so that each real pendulum has its own controller.}
\begin{tabular}{ccccc}
\toprule
\multirow{2}[3]{*}{Mode} & \multicolumn{2}{c}{Real pendulums A and B} & \multicolumn{2}{c}{Simulated pendulums C, D, and E}\\
\cmidrule(lr){2-3} \cmidrule(lr){4-5}
& Stabilization & Synchronization & Stabilization & Synchronization \\
\midrule
1 & Local @ \SI{10}{\milli\second} & -- & Local @ \SI{10}{\milli\second} & -- \\
2 & Remote (1 ctrl) @ \SI{40}{\milli\second} & -- & Local @ \SI{10}{\milli\second} & -- \\
3 & Remote (2 ctrl) @ \SI{45}{\milli\second} & -- & Local @ \SI{10}{\milli\second} & -- \\
4 & Local @ \SI{10}{\milli\second} & \SI{50}{\milli\second} & Local @ \SI{10}{\milli\second} & -- \\
5 & Local @ \SI{10}{\milli\second} & \SI{50}{\milli\second} & Local @ \SI{10}{\milli\second} & \SI{50}{\milli\second} \\
\bottomrule
\end{tabular}
\label{tbl:mode_spec}
\addtolength{\tabcolsep}{-2.6pt}
\end{table}


\begin{sidewaysfigure}
	\includegraphics[width=1\linewidth]{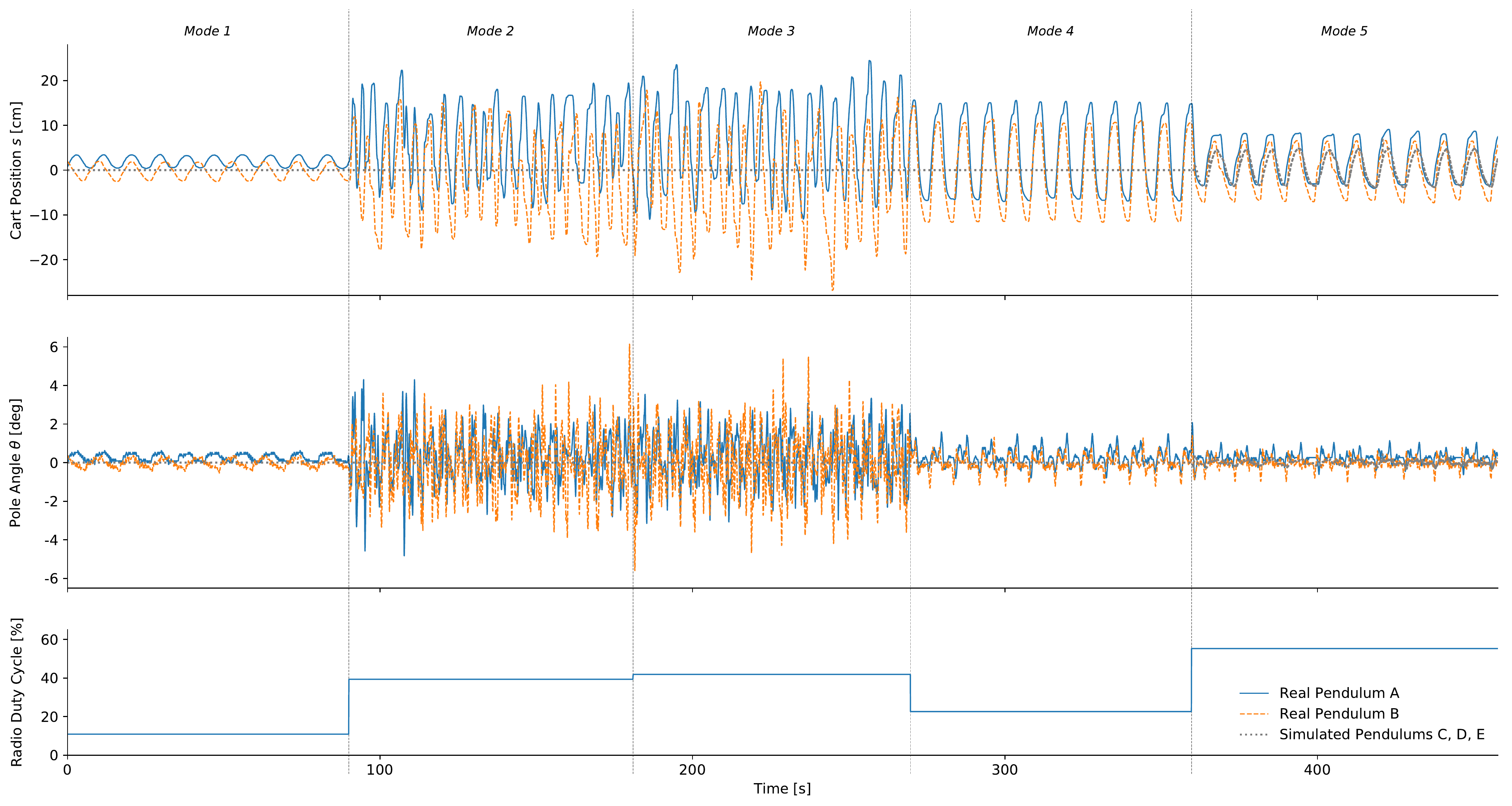}
	\caption{Control performance of all five pendulums and per-mode average radio duty cycle of the node attached to pendulum A throughout the example sequence $1 \rightarrow 2 \rightarrow 3 \rightarrow 4 \rightarrow 5$ of mode changes (see \tabref{tbl:mode_spec}). \capt{Our system can safely change between different modes without sacrificing closed-loop stability.}}
	\label{fig:mode_changes_long_exp}
\end{sidewaysfigure}

\subsection{Mode Changes}
\label{sec:eval_mode_changes}

Next, we test the ability of our system to dynamically change between different modes at runtime without affecting stability, thereby validating the theoretical results from \secref{sec:ctrl_mode_changes}.

\fakepar{Setup}
We consider the five modes in \tabref{tbl:mode_spec}.
They differ in the number and the way the cart-poles in our testbed\footnote{Due to logistic constraints, we slighly adapt the testbed layout for this experiment (see \figref{fig:floorplan_mode_changes} in supplementary material).} are stabilized and/or synchronized, which implies different application tasks being executed on certain nodes as well as different traffic loads, traffic patterns, and update intervals.
We compute the corresponding schedules offline (see \secref{subsec:scheduling}), and distribute them to the nodes before the experiment begins.
At runtime, we manually trigger a mode change at the host every \SI{90}{\second}, which becomes effective $r=5$ communication rounds later using the mode-change protocol from \secref{sec:design_mode_switches}.
For these settings, we can use Theorem~\ref{thm:switch_system_high_level} to \emph{formally prove stability}: The average dwell time over any given interval (corresponding to \SI{90}{\second}) is at least 2000 discrete time steps (dwell time in mode 3), which exceeds the required minimum average dwell time of $\tau_\mathrm{a}^* = 289$ (\SI{11.56}{\second} in mode 2 and \SI{13.005}{\second} in mode 3) discrete time steps.


\fakepar{Results}
\figref{fig:mode_changes_long_exp} shows cart position and pendulum angle of all five cart-poles over time for the example sequence $1 \rightarrow 2 \rightarrow 3 \rightarrow 4 \rightarrow 5$ of mode changes; the bottommost plot shows for each mode the average radio duty cycle of the node at pendulum A.
Our results empirically validate the theoretical results for the tested scenario: The system remains stable despite the mode changes.
The general behavior of the pendulums with and without synchronization is similar to the previous experiment.
Interestingly, however, when synchronizing all five carts in mode 5, the two real pendulums exhibit less variation in terms of cart position and pendulum angle compared with mode 4: The idealized simulation model underlying pendulums C, D, and E influence the behavior of the real pendulums A and B, and vice versa.
Looking at the duty cycle, we notice that each mode incurs a different communication cost.
Even in mode 1, where the application tasks exchange no messages over the network, the duty cycle is \SI{10}{\percent}.
This is due to the beacon sent at the beginning of communication rounds, scheduled with a period of \SI{25}{\milli\second} also in mode 1 to quickly react to mode-change events.

\subsection{Impact of Update Interval}
\label{sec:update_interval}

The following experiment takes a closer look at the impact of different update intervals (and hence different end-to-end delays) on the control performance and the associated communication costs.

\fakepar{Setup}
To minimize effects that we cannot control, such as external interference, we use two nodes close to each other: Real pendulum A attached to node 1 is stabilized via a remote controller running on node 2 (\cf \figref{fig:testbed_overall}).
We test different update intervals in consecutive runs.
Starting with the smallest update interval of 20\ms that the wireless embedded system can support in this scenario, we increase the update interval in steps of 10\ms until stabilization is no longer possible.

\begin{figure}
\hspace*{\fill}
\begin{subfigure}[t]{0.2425\linewidth}
    \centering
    \includegraphics[width=1\textwidth]{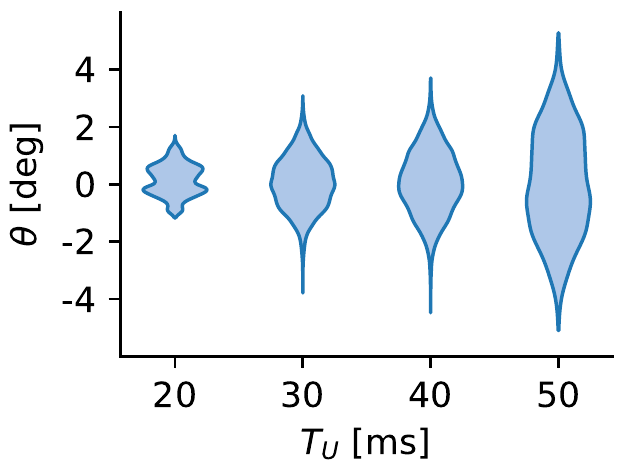}
    \subcaption{Pole angle.}
\end{subfigure}
\hfill
\begin{subfigure}[t]{0.2425\linewidth}
    \centering
    \includegraphics[width=1\textwidth]{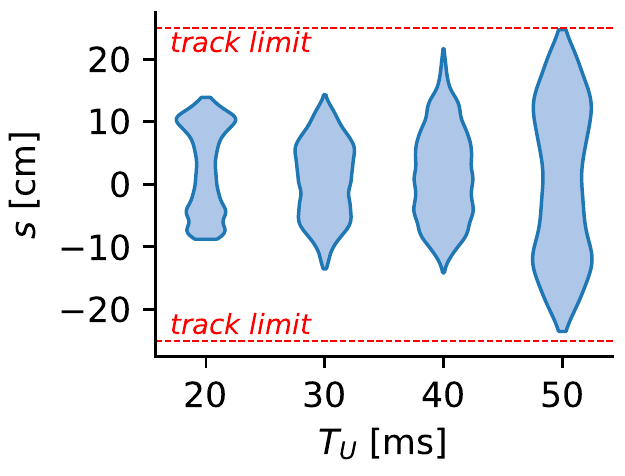}
    \subcaption{Cart position.}
\end{subfigure}
\hfill
\begin{subfigure}[t]{0.2425\linewidth}
    \centering
    \includegraphics[width=1\textwidth]{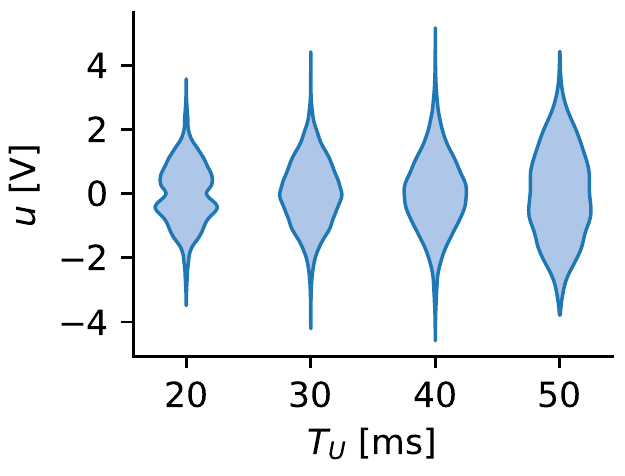}
    \subcaption{Control input.}
\end{subfigure}
\hfill
\begin{subfigure}[t]{0.2425\linewidth}
    \centering
    \includegraphics[width=1\textwidth]{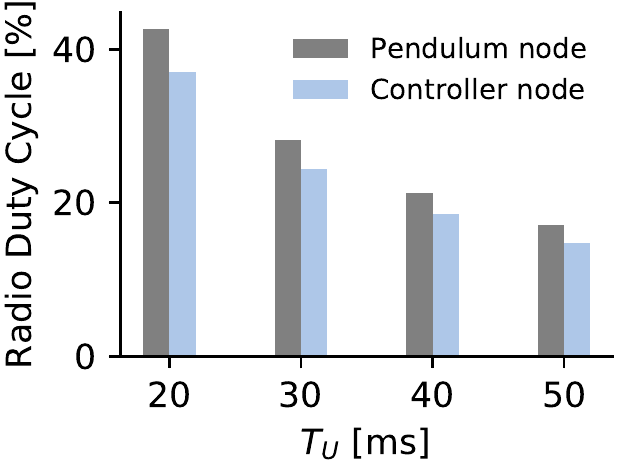}
    \subcaption{Radio duty cycle.}
\end{subfigure}
\hspace*{\fill}
\caption{Distribution of control performance metrics and average radio duty cycle when stabilizing an inverted pendulum over low-power wireless at different update intervals. The plots show the distribution of the respective variable for fixed update intervals \Tupdate. \capt{A larger update interval leads to larger pole angles and more movement of the cart, but also significantly reduces the communication costs for feedback control.}}
\label{fig:timescale}
\end{figure}

\fakepar{Results}
\figref{fig:timescale} shows control performance and radio duty cycle for different update intervals based on more than \num{12500} data points.
We see that a longer update interval causes larger pole angles and more movement of the cart.
Indeed, the total distance the cart moves during an experiment increases from \SI{3.40}{\m} for \SI{20}{\milli\second} to \SI{9.78}{\m} for \SI{50}{\milli\second}.
This is consistent with the wider distribution of the control input for larger update intervals.
At the same time, the radio duty cycle decreases from 40\percent for \SI{20}{\milli\second} to 15\percent for \SI{50}{\milli\second}.
Hence, there is a trade-off between control performance and the associated communication costs, which may be exploited based on the application requirements.

\subsection{Impact of Dwell Time}
\label{sec:eval_dwell_time}

Unlike the experiment in \secref{sec:eval_mode_changes}, we now evaluate control performance and stability when mode changes occur faster than the minimum average dwell time stipulated by Theorem~\ref{thm:switch_system_high_level} allows.

\fakepar{Setup}
We consider the setup from the last experiment and two modes: remote stabilization with an update interval of \SI{30}{\milli\second} and \SI{40}{\milli\second}.
We configure the mode-change protocol such that the new mode becomes effective $r=\{25,50,100\}$ communication rounds after the first request, and let the host request the next mode change already one round later.
Different from the experiment in \secref{sec:eval_mode_changes}, with these settings stability is \emph{not} guaranteed according to Theorem~\ref{thm:switch_system_high_level} as the system changes modes \emph{significantly faster} than the required minimum average dwell time of $\tau_\mathrm{a}^* = 272$.
Note that $r=25$ results in the shortest dwell time at which stabilization is possible in our setting.


\begin{figure}
\hspace*{\fill}
\begin{subfigure}[t]{0.2425\linewidth}
    \centering
    \includegraphics[width=1\textwidth]{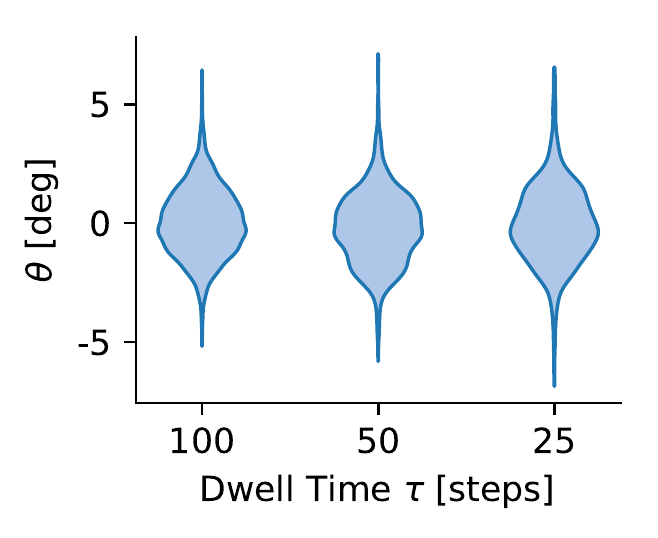}
    \subcaption{Pole angle.}
\end{subfigure}
\hfill
\begin{subfigure}[t]{0.2425\linewidth}
    \centering
    \includegraphics[width=1\textwidth]{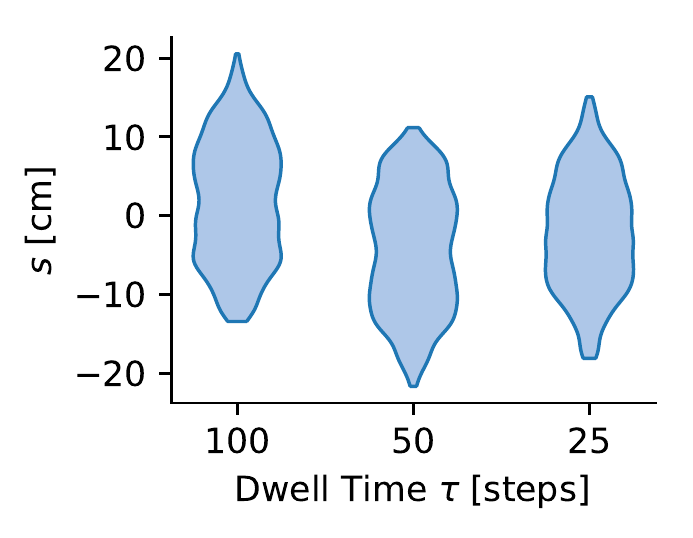}
    \subcaption{Cart position.}
\end{subfigure}
\hfill
\begin{subfigure}[t]{0.2425\linewidth}
    \centering
    \includegraphics[width=1\textwidth]{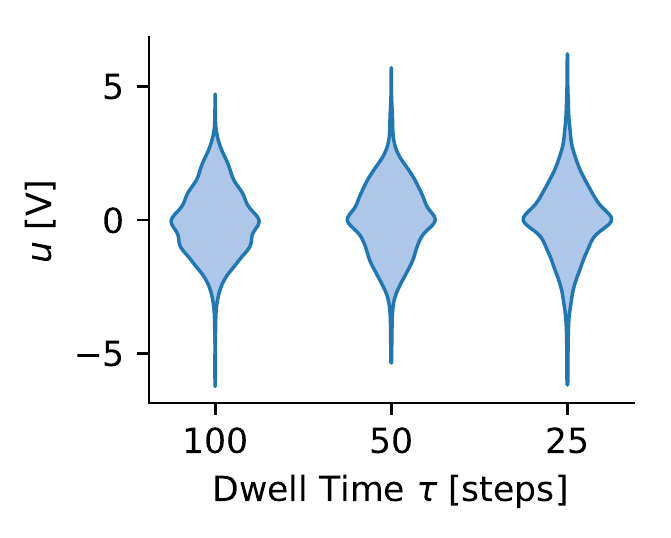}
    \subcaption{Control input.}
\end{subfigure}
\hspace*{\fill}
\caption{Distribution of control performance when stabilizing an inverted pendulum while changing modes with different dwell times. \capt{The control performance shows no noticable difference across the tested dwell times.}}
\label{fig:quick_switches}
\end{figure}

\fakepar{Results}
\figref{fig:quick_switches} shows that pole angle and cart position always stay in a safe regime and rarely more than half of the possible input voltage is used.
A difference in control performance is hardly visible across the three dwell times, suggesting that control performance is nearly independent of the dwell time when stabilization is possible.
As the system is stable even for dwell times much smaller than the average dwell time requested by Theorem~\ref{thm:switch_system_high_level}, we can conclude that it is indeed safe to neglect the dead time (\textbf{P4}) in the stability analysis from \secref{sec:ctrl_mode_changes}.


\subsection{Resilience to Message Loss}
\label{sec:resilience} 

Finally, we look at how control performance is affected by significant message loss over wireless.

\fakepar{Setup}
We use again the two-node setup and fix the update interval at \SI{20}{\ms}.
We let both nodes intentionally drop messages in two different ways.
In a first experiment, the two nodes independently drop a received message according to a Bernoulli process with given failure probability.
We test three failure probabilities in different runs: 15\percent, 45\percent, and 75\percent.
In a second experiment, the two nodes drop a certain number of consecutive messages every \SI{10}{\s}, namely between 10 and 40 messages in different runs.
This artificially violates property \textbf{P2} of the wireless embedded system, yet allows us to evaluate the robustness of our control design to unexpected conditions.

\begin{figure}
\hspace*{\fill}
\begin{subfigure}[t]{0.2425\linewidth}
    \centering
    \includegraphics[width=\textwidth]{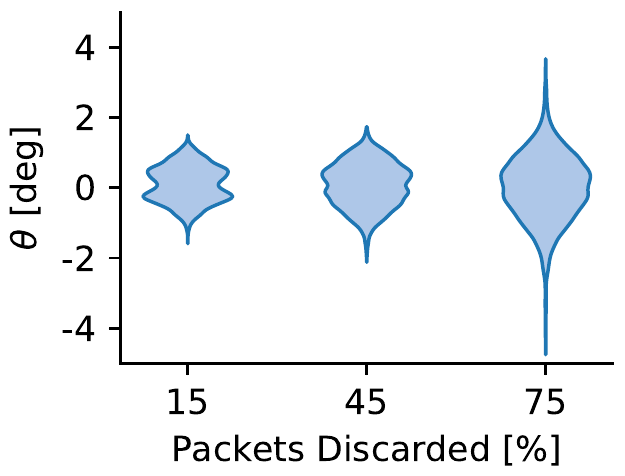}
    \subcaption{Pole angle.}
    \label{fig:drops_angle}
\end{subfigure}
\hfill
\begin{subfigure}[t]{0.2425\linewidth}
    \centering
    \includegraphics[width=\textwidth]{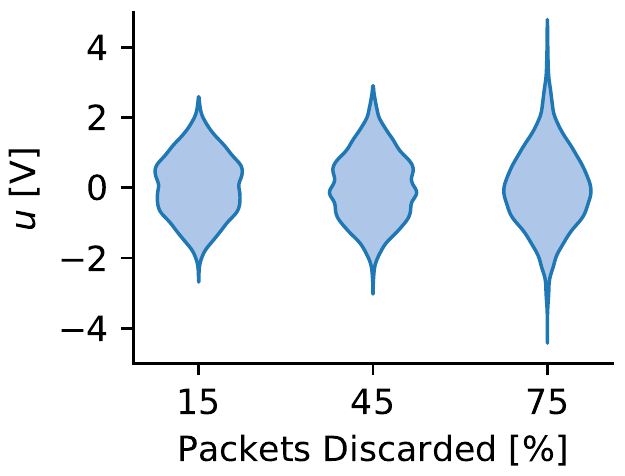}
    \subcaption{Control input.}
    \label{fig:drops_voltage}
\end{subfigure}
\hfill
\begin{subfigure}[t]{0.2425\linewidth}
    \centering
    \includegraphics[width=\textwidth]{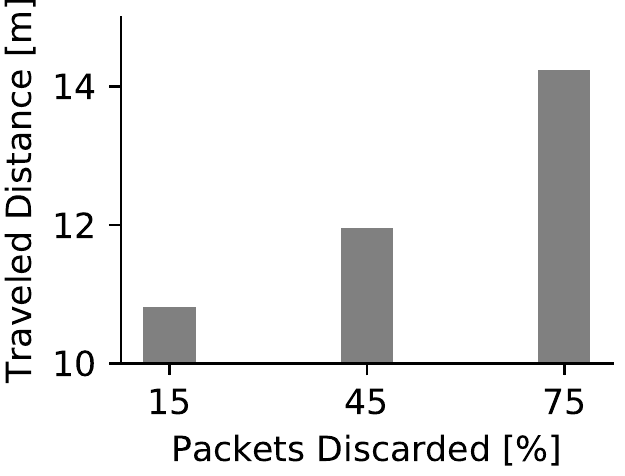}
    \subcaption{Traveled distance.}
    \label{fig:drops_distance}
\end{subfigure}
\hspace*{\fill}
\par\medskip
\begin{subfigure}[t]{1\linewidth}
    \centering
    \includegraphics[width=\textwidth]{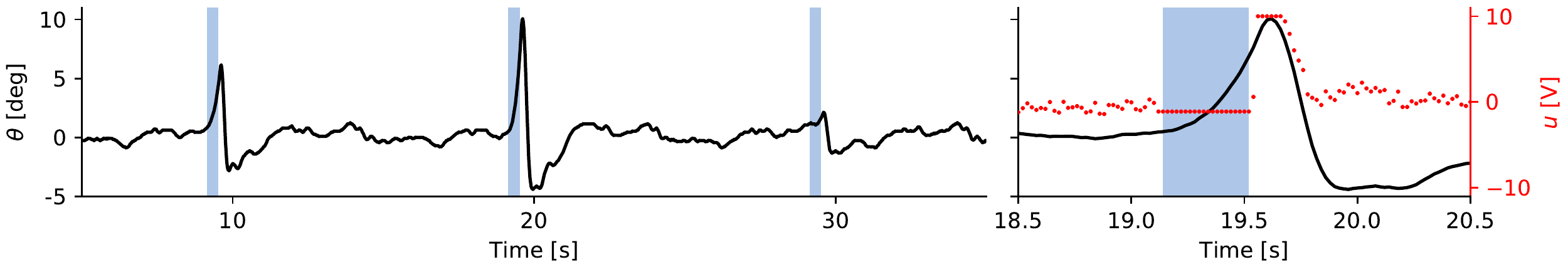}
    \subcaption{Pole angle over time for three artificial bursts of 40 consecutively lost messages every \SI{10}{\s} (shaded areas). The plot to the right zoomes into the second burst phase.}
    \label{fig:burst}
\end{subfigure}
\caption{Control performance when stabilizing a pendulum over low-power wireless under artificially injected message loss, for \iid Bernoulli losses in (a), (b), and (c) and for bursts of multiple consecutive losses in (d). \capt{Depending on the update interval, the pendulum can be stabilized despite significant and bursty message loss.}}
\label{fig:drops}
\end{figure}

\fakepar{Results}
Figures~\ref{fig:drops_angle} and \ref{fig:drops_voltage} show the distributions of the pole angle and the control input for the three failure probabilities.
We see that the control performance decreases at higher loss rates, but the pendulum can be stabilized even at a loss rate of 75\%.
One reason for this is the short update interval.
For example, losing 50\percent of the messages at an update interval of 20\ms is comparable to an update interval of 40\ms without message loss, which is sufficient to stabilize the pendulum.
With a longer update interval, the system would not be able to tolerate such high message loss.

\figref{fig:burst} plots the pole angle over time for a burst length of 40 consecutively lost messages, with the right plot zooming into the second burst phase.
No control inputs are received during a burst, and depending on the state of the pendulum and the control input right before a burst, the impact of a burst can be very different as visible in \figref{fig:burst}.
The magnified plot shows that the pole angle diverges from around \SI{0}{\degree} with increasing speed.
When the burst ends, the control input rises to its maximum value of \SI{10}{\V} to bring the pendulum back to a non-critical state, which usually takes \SIrange[range-units=single, range-phrase=--]{1}{2}{\s}.
These results show that while property \textbf{P2} of our wireless embedded system design significantly simplifies control design and analysis, the overall system remains stable even if \textbf{P2} is dramatically violated, which is nevertheless very unlikely as demonstrated in prior work~\cite{Zimmerling2013,Karschau2018}.


\section{Concluding Remarks}
\label{sec:ending}
We have presented a \cps design that enables, for the first time, fast feedback control with stability guarantees and mode changes over wireless multi-hop networks at update intervals of \SIrange[range-units=single, range-phrase=-]{20}{50}{\milli\second}.
Existing solutions for feedback control over real wireless networks are either limited to single-hop scenarios or physical systems with slow dynamics, where update intervals on the order of seconds are sufficient.
Using a novel co-design approach, we tame communication imperfections and account for the resulting key properties of the wireless embedded system in the control design.
This has allowed us to formally prove closed-loop stability of the entire \cps in the presence of noise and mode changes.
Experiments on a cyber-physical testbed with multiple physical systems further demonstrate the applicability, versatility, and robustness of our design.
We thus maintain that our work represents an important stepping stone toward realizing the \cps vision.

We have deployed our system and testbed in other locations.
This includes a public demonstration at CPS-IoT Week 2019~\cite{mager2019b} featuring three real inverted pendulums spread out across an indoor space about twice as large as the one used for the experiments in this article.
Remote stabilization and synchronization worked reliably over a three-hop network despite the larger physical space, the presence of hundreds of people, and interference from other wireless equipment.\footnote{A video of the demonstration can be found at \url{https://youtu.be/AtULmfGkVCE}.}

In terms of scalability, our current implementation satisfies the needs of many \cps applications relying on tens of devices forming a network with a diameter of up to three hops~\cite{Akerberg2011,Luvisotto2017}.
Indeed, in our system, the minimum update interval and end-to-end delay are independent of the number of devices, but increase linearly with the traffic load and the network diameter.
This limits the number of physical systems and the extent of the deployment area that can be supported for a given control task.
Specifically, our current implementation supports remote stabilization and synchronization of up to three and five inverted pendulums, respectively, over a three-hop network at an update interval of \SI{50}{\milli\second}.
One possibility to surpass these limits is to switch to a faster low-power wireless physical layer.
For instance, Al Nahas \etal have recently demonstrated an implementation of a network-flooding primitive similar to Glossy on Bluetooth Low Energy 5 radios that can operate at a transmit bitrate of up to 2\,Mbit/s, which is 8$\times$ faster than the radios we use~\cite{AlNahas2019}.
With this, we could seamlessly support more physical systems and deeper networks with more hops.

Although 5G standardization is still ongoing and initial field trials have just begun, we briefly comment on similarities and differences between 5G and our work.
5G requires the deployment of dedicated infrastructure (base stations) and devices operating in licensed spectrum.
By contrast, our approach is infrastructure-less and targets commodity, off-the-shelf devices that are available today and operate in unlicensed spectrum.
As a result, our approach incurs lower costs compared with 5G and allows users to remain independent from network operators, with full control over their hardware and data.
Network slicing and ultra-reliable low-lateny communication (URLLC) are two key ingredients of 5G~\cite{Wollschlaeger2017}.
Network slicing entails the allocation of network resources (\eg bandwidth) to an application or service.
Using TTW, which supports the scheduling of multiple CPS applications, we essentially perform static network slicing in a wireless multi-hop network, while the mode switches allow us to support a predefined set of slicing configurations.
URLLC targets packet loss rates of $10^{-9}$ and packet latencies of \SI{1}{\milli\second} between network interfaces.
While short network latencies are indeed a prerequisite to control fast physical systems, our work demonstrates how to achieve short end-to-end latencies with bounded, negligible jitter among application tasks and that closed-loop stability guarantees do not necessarily require close-to-zero packet loss.


\begin{acks}

We thank Harsoveet Singh and  Felix Grimminger for
help with the testbed, and the TEC group at ETH Zurich
for making the design of the DPP platform available to the
public.
This work was supported in part by the German Research Foundation (DFG) within the Cluster of Excellence CFAED (grant 
EXC 1056), SPP 1914 (grants ZI 1635/1-1 and TR 1433/1-1), and 
the Emmy Noether project NextIoT (grant ZI 1635/2-1); the Cyber Valley Initiative; and the Max Planck Society.

\end{acks}

\section*{Author Contributions}
The electronic appendix, which can be accessed in the ACM Digital Library, contains a detailed description of the authors' individual contributions to the work presented in this article.

\bibliographystyle{ACM-Reference-Format}
\bibliography{abbrv,ref}

\clearpage

\appendix

\section{Supplementary Materials}

\subsection{Proof of Theorem~\ref{thm:MSSourSysNoise}}
\label{sec:noise}
We now consider mean-square stability of the system described in~\eqref{eqn:gen_sys_noise} without setting $\epsilon(k)=0$.
We thus consider mean-square stability in the sense of Definition~\ref{def:MSS}. 

The state correlation matrix for system~\eqref{eqn:gen_sys_noise} satisfies the difference equation~\cite{Boyd1994}
\begin{align}
\label{eqn:state_corr_rec_noise}
M(k+1)=\tilde{A}_0M(k)\tilde{A}_0^\mathrm{T} + \tilde{E}_0W\tilde{E}_0^\mathrm{T} + \sum_{i=1}^L\sigma_i^2\left(\tilde{A}_iM(k)\tilde{A}_i^\mathrm{T}+\tilde{E}_iW\tilde{E}_i^\mathrm{T} \right).
\end{align}
The noise covariance $W$ and the matrices $\tilde{E}_0$ and $\tilde{E}_i$ are constant, thus, we introduce $\hat{W} := \tilde{E}_0W\tilde{E}_0^\mathrm{T} + \sum_{i=1}^L\sigma_i^2\tilde{E}_iW\tilde{E}_i^\mathrm{T}$ to simplify notation.
We further define the linear map $\Gamma(X) = \tilde{A}_0X\tilde{A}_0^\mathrm{T} + \sum_{i=1}^L\sigma_i^2\tilde{A}_iX\tilde{A}_i^\mathrm{T}$, then~\eqref{eqn:state_corr_rec_noise} simplifies to
\begin{align}
\label{eqn:state_corr_abbrv}
M(k+1) = \Gamma(M(k)) + \hat{W}.
\end{align}
With that, we can state the following result:
\begin{lem}
\label{lem:MSS_noise}
If the system~\eqref{eqn:gen_sys_noise} with $\epsilon(k)=0\,\forall k$ is MSS according to Definition~\ref{def:MSS_noise_free}, the system defined in~\eqref{eqn:gen_sys_noise} is MSS according to Definition~\ref{def:MSS}.
\end{lem}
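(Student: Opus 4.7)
\textbf{Proof plan for Lemma~\ref{lem:MSS_noise}.} The plan is to exploit the linearity of the recursion~\eqref{eqn:state_corr_abbrv} governing the state correlation matrix. The key observation is that the map $\Gamma$ acting on the state correlation matrix is exactly the same in the noise-free case and in the noisy case; only the additive constant $\hat{W}$ differs. I will use this to reduce the noisy-case convergence to a statement about the spectral radius of $\Gamma$ that follows from the hypothesis.

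First, I would unroll the recursion~\eqref{eqn:state_corr_abbrv} to obtain the closed-form expression
\begin{equation*}
M(k) = \Gamma^k(M(0)) + \sum_{i=0}^{k-1} \Gamma^i(\hat{W}),
\end{equation*}
where $\Gamma^k$ denotes the $k$-fold composition of the linear map $\Gamma$ on the space of symmetric $n\times n$ matrices. In the noise-free setting we have $\hat{W}=0$ and hence $M(k)=\Gamma^k(M(0))$, so the hypothesis (MSS in the sense of Definition~\ref{def:MSS_noise_free}) is equivalent to $\Gamma^k(M(0))\to 0$ for every initial $M(0)\succeq 0$, and therefore for every symmetric $M(0)$ by linearity and the decomposition of a symmetric matrix as a difference of positive semidefinite matrices.

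Next, I would argue that pointwise convergence of $\Gamma^k$ to $0$ on a finite-dimensional space implies that the spectral radius $\rho(\Gamma)<1$. This is a standard fact about linear operators on finite-dimensional normed spaces and can be stated as: the sequence of operators $\Gamma^k$ converges to $0$ in operator norm, and consequently the Neumann-type series $\sum_{i=0}^{\infty}\Gamma^i$ converges absolutely in operator norm to $(\mathrm{Id}-\Gamma)^{-1}$. Applying this to the fixed symmetric matrix $\hat{W}$ yields that $\sum_{i=0}^{k-1}\Gamma^i(\hat{W})$ converges to a finite limit $M_\infty := (\mathrm{Id}-\Gamma)^{-1}(\hat{W})$ as $k\to\infty$.

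Combining both terms, $\lim_{k\to\infty} M(k) = 0 + M_\infty = M_\infty < \infty$ for any initial $z(0)$, which is exactly the definition of mean-square stability in the sense of Definition~\ref{def:MSS}. The main conceptual step, and the only one requiring care, is the passage from pointwise convergence $\Gamma^k(X)\to 0$ on positive semidefinite $X$ to $\rho(\Gamma)<1$; the rest is a direct calculation. I expect no substantial obstacles, as all the ingredients are standard results for linear recursions driven by stationary additive inputs, and the structure of $\Gamma$ (a sum of congruence-style maps with positive coefficients) preserves positive semidefiniteness so that the extension from positive semidefinite to arbitrary symmetric initial conditions goes through without difficulty.
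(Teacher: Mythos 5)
Your proof is correct and follows essentially the same route as the paper: unroll the recursion to $M(k)=\Gamma^k(M(0))+\sum_{i=0}^{k-1}\Gamma^i(\hat{W})$, kill the homogeneous term using the noise-free MSS hypothesis, and show the forced-response series converges to a finite limit. The only difference is in the last step and is cosmetic: you obtain convergence of the series via $\rho(\Gamma)<1$ and the Neumann series $(\mathrm{Id}-\Gamma)^{-1}(\hat{W})$, whereas the paper telescopes the sum using the solution $\bar{W}$ of $\hat{W}=\bar{W}-\Gamma(\bar{W})$ (the two limits coincide, since $\bar{W}=(\mathrm{Id}-\Gamma)^{-1}(\hat{W})$), and your explicit observation that pointwise convergence of $\Gamma^k$ on positive semidefinite matrices extends by linearity to all symmetric matrices, hence forces $\rho(\Gamma)<1$ in finite dimensions, makes rigorous a point the paper handles by citation.
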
 
\begin{proof}
We first write~\eqref{eqn:state_corr_abbrv} in explicit form,
\begin{align}
\label{eqn:state_corr_expl}
M(k) = \Gamma^k(M(0)) + \sum_{i=0}^{k-1}\Gamma^i(\hat{W}),
\end{align}
where $\Gamma^k$ denotes the repeated composition of $\Gamma$ with itself and $\Gamma^0$ means identity.
Now taking the limit $k \to \infty$ yields
\begin{align}
\label{eqn:stab_noise_proof}
\begin{split}
\lim_{k\to\infty}M(k) &= \underbrace{\lim_{k\to\infty}\Gamma^k(M(0))}_{= 0\, \text{by Lemma~\ref{lem:LMIcond}}} + \lim_{k\to\infty}\sum_{i=0}^{k-1}\Gamma^i(\hat{W})
 = \lim_{k\to\infty}\sum_{i=0}^{k-1}\Gamma^i(\bar{W}-\Gamma(\bar{W}))\\
 &= \lim_{k\to\infty} \Big( \sum_{i=0}^{k-1}\Gamma^i(\bar{W}) - \sum_{i=1}^{k}\Gamma^i(\bar{W}) \Big)
 = \Gamma^0(\bar{W})-\lim_{k\to\infty}\Gamma^k(\bar{W})
= \bar{W},
\end{split}
\end{align}
where $\bar{W}$ is the solution to $\hat{W}=\bar{W}-\Gamma(\bar{W})$, which exists as the linear map $\Gamma$ is stable (cf.~\cite[p.~132]{Boyd1994}) and was used after the second equal sign. 
The argument $\lim_{k\to\infty}\Gamma^k(M(0)) = 0$ follows as $\Gamma^k(M(0))$ exactly describes the the state correlation matrix in the noise-free case.
As we assume the noise-free system to be MSS, $\lim_{k\to\infty}\Gamma^k(M(0)) = 0$ directly follows from Definition~\ref{def:MSS_noise_free}.
Definition~\ref{def:MSS_noise_free} further requires the state correlation matrix to vanish for any initial $z(0)$ and thus $M(0)$.
Therefore, we can set $M(0)=\bar{W}$ and $\lim_{k\to\infty}\Gamma^k(\bar{W})$ will also vanish.
\end{proof}

That is, by proving mean-square stability of the noise-free system, we also have the stability guarantee for the perturbed system.

We can now use Lemma~\ref{lem:MSS_noise} to prove Theorem~\ref{thm:MSSourSysNoise}. For this we rewrite~\eqref{eqn:matrix_repr} to include noise
\begin{align}
\label{eqn:matr_repr_noise}
\underbrace{\begin{pmatrix}
x(k+1)\\\hat{x}(k+1)\\u(k+1)\\ \hat{u}(k+1)
\end{pmatrix}}_{z(k+1)}
&=\underbrace{\begin{pmatrix}
A&0&B&0\\
\theta A&(1-\theta)A&0&B\\
0&\phi FA&(1-\phi)I&\phi FB\\
0&FA&0&FB
\end{pmatrix}}_{\tilde{A}(k)}
\underbrace{\begin{pmatrix}
x(k)\\ \hat{x}(k)\\u(k)\\ \hat{u}(k)
\end{pmatrix}}_{z(k)} + \underbrace{\begin{pmatrix}
1 & 0\\
0 & \theta\\
0 & 0\\
0 & 0
\end{pmatrix}}_{\tilde{E}(k)}
\underbrace{
\vphantom{\begin{pmatrix}
1 & 0\\
0 & \theta\\
0 & 0\\
0 & 0
\end{pmatrix}}
\begin{pmatrix}
v(k)\\w(k)
\end{pmatrix}}_{\epsilon(k)}.
\end{align}
We employ the same transformation for $\theta$ and $\phi$ as in Theorem~\ref{thm:MSSourSystem}.
The random variables $v(k)$ and $w(k)$ are \iid, zero-mean Gaussian random variables with finite variance ($\Sigma_\mathrm{proc}$ respectively $\Sigma_\mathrm{meas}$).
Thus, all properties of~\eqref{eqn:gen_sys_noise} are satisfied, and Lemma~\ref{lem:MSS_noise} yields the stability result.

\subsection{Proof of Theorem~\ref{thm:switch_system_high_level}}
\label{sec:proof_mode_changes}
To prove Theorem~\ref{thm:switch_system_high_level}, we will employ the following stability result for switched systems:
\begin{lem}[\cite{zhang2008exponential}]
\label{lem:stability_avg_dwell_time}
Consider the discrete-time switched system $x_{k+1} = f_{\sigma(k)}(x_k)$, $\sigma(k)\in\mathcal{F}$ and let $0<\alpha<1$, $\mu>1$ be given constants.
Suppose that there exists a Lyapunov function candidate $V(x) = \{V_{\sigma(k)}(x),\sigma(k)\in\mathcal{F}\}$ satisfying the following properties:
\begin{subequations}
\begin{align}
\label{eqn:cond_a_stab_avg_dwell_time}
\Delta V_{\sigma(k)}\left(x_k\right)&\coloneqq V_{\sigma(k)}\left(x_{k+1}\right)-V_{\sigma(k)} \left(x_k\right) \leq -\alpha V_{\sigma(k)}\left(x_k\right) \quad \forall k\in\left[k_l,k_{l+1}\right]\\
\label{eqn:cond_b_stab_avg_dwell_time}
V_{\sigma(k_l)}\left(x_{k_l}\right) &\leq \mu V_{\sigma(k_{l-1})}\left(x_{k_l}\right).
\end{align}
\label{eqn:cond_stab_avg_dwell_time}
\end{subequations}
Then the system is globally exponentially stable for any switching signal with the average dwell time
\begin{align}
\label{eqn:stab_avg_dwell_time}
\tau_\mathrm{a}\ge\tau_\mathrm{a}^*=\mathrm{ceil}\left[-\frac{\ln \mu}{\ln\left(1-\alpha\right)}\right],
\end{align}
where $\text{ceil}(a)$ is a function rounding $a\in\R$ to the nearest integer greater than or equal to $a$.
\end{lem}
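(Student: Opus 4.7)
The plan is to set up a multiple-Lyapunov-function argument that reduces the claim to Lemma~\ref{lem:stability_avg_dwell_time}. First, I would reduce to the noise-free case: working with the second-moment matrix $M(k) = \E[z(k)z^\mathrm{T}(k)]$, the recursion in the switched setting reads $M(k+1) = \Gamma_{\sigma(k)}(M(k)) + \hat{W}$, with $\Gamma_i$ the linear map on positive semidefinite matrices corresponding to mode $i$ and $\hat{W}\succeq 0$ the noise contribution. The telescoping argument used in~\eqref{eqn:stab_noise_proof} in the proof of Lemma~\ref{lem:MSS_noise} carries over verbatim to a switching index: if $\Gamma_{\sigma(k)}$ alone drives $M(k)$ to zero from every initial $M(0)\succeq 0$, then the noise-driven trajectory stays bounded, which is Definition~\ref{def:MSS}.

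Next, the hypothesis that the conditions of Theorem~\ref{thm:MSSourSysNoise} (and hence of Lemma~\ref{lem:LMIcond}) hold gives, for every mode $i\in\mathcal{F}$, a positive definite $P_i\succ 0$ satisfying the strict LMI for that mode. I would take the mode-indexed Lyapunov candidate $V_i(M) = \Tr(P_i M)$ on the cone of positive semidefinite matrices. A direct computation using the cyclic property of the trace gives
\begin{equation*}
V_i(\Gamma_i(M)) - V_i(M) = \Tr\Bigl[\Bigl(\tilde{A}_{0,i}^\mathrm{T} P_i \tilde{A}_{0,i} - P_i + \sum_j \sigma_{p_j,i}^2 \tilde{A}_{j,i}^\mathrm{T} P_i \tilde{A}_{j,i}\Bigr) M\Bigr],
\end{equation*}
and the bracketed matrix is strictly negative definite by the LMI. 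Since the mode set is finite, a generalized-eigenvalue argument yields a uniform $\alpha \in (0,1)$ with $V_i(\Gamma_i(M)) \leq (1-\alpha) V_i(M)$ for every $i$ and every $M\succeq 0$, which is exactly condition~\eqref{eqn:cond_a_stab_avg_dwell_time}.

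I would then bound the jumps at switching times. Because $\{P_i\}_{i\in\mathcal{F}}$ is a finite family of positive definite matrices, each ordered pair admits a constant $\mu_{ij}\geq 1$ with $P_i \preceq \mu_{ij} P_j$, and $\mu = \max_{i,j}\mu_{ij}$ then gives $V_i(M) \leq \mu V_j(M)$ for all $M\succeq 0$, which is condition~\eqref{eqn:cond_b_stab_avg_dwell_time}. Invoking Lemma~\ref{lem:stability_avg_dwell_time} with these $\alpha$ and $\mu$ defines the minimum average dwell time $\tau_\mathrm{a}^* = \mathrm{ceil}\bigl[-\ln\mu / \ln(1-\alpha)\bigr]$ and produces exponential decay of $V_{\sigma(k)}(M(k))$, and therefore of $M(k)$ itself, along every switching signal with $\tau_\mathrm{a} \geq \tau_\mathrm{a}^*$. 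Combined with the reduction in the first step, this establishes Definition~\ref{def:MSS} for the full noisy switched system.

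The main obstacle is that Lemma~\ref{lem:stability_avg_dwell_time} is formulated for a \emph{deterministic} switched system $x_{k+1} = f_{\sigma(k)}(x_k)$, whereas~\eqref{eqn:switched_sys} is stochastic. My way around this is precisely the lift above: apply the lemma not to the random trajectory $z(k)$ but to the deterministic trajectory of the second-moment matrix $M(k)$ on the cone of positive semidefinite matrices, where each $\Gamma_i$ is linear and the trace Lyapunov functions $V_i$ play the role demanded by~\eqref{eqn:cond_stab_avg_dwell_time}. A secondary subtlety is that the discrete step $k \to k+1$ may differ across modes because the update interval changes at a switch; however, Lemma~\ref{lem:LMIcond} does not require a common step length and the construction of $P_i$ is purely mode-by-mode, so this heterogeneity is absorbed without additional work.
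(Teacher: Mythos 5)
There is a genuine gap here, and it is structural: the statement you were asked to prove \emph{is} Lemma~\ref{lem:stability_avg_dwell_time}, yet the very first move of your plan is to ``reduce the claim to Lemma~\ref{lem:stability_avg_dwell_time}.'' You invoke the result you were supposed to establish, so as a proof of this statement the argument is circular and proves nothing. What you have actually written is a proof of Theorem~\ref{thm:switch_system_high_level} --- and, read as such, it is a faithful and in fact sharper version of the paper's own supplementary proof of that theorem: the paper likewise lifts to the deterministic recursion of the state correlation matrix $M(k)$ and applies the lemma, while your explicit trace Lyapunov functions $V_i(M)=\Tr(P_i M)$, the cyclic-trace computation, and the pairwise constants $\mu_{ij}$ with $P_i \preceq \mu_{ij}P_j$ make concrete what the paper only asserts in passing. (One caveat even on that reading: the claim that the telescoping argument of Lemma~\ref{lem:MSS_noise} ``carries over verbatim'' to a switching index is too quick, since the fixed-point construction $\hat{W}=\bar{W}-\Gamma(\bar{W})$ uses time-invariance of $\Gamma$; under switching one instead gets boundedness from exponential contraction of the homogeneous part plus a bounded additive term.) But none of this is a proof of the lemma itself, which the paper takes from \cite{zhang2008exponential} without proof.

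A genuine proof of Lemma~\ref{lem:stability_avg_dwell_time} is the standard multiple-Lyapunov-function counting argument, and every one of its steps is absent from your write-up. Between consecutive switching instants, condition~\eqref{eqn:cond_a_stab_avg_dwell_time} gives $V_{\sigma(k)}(x_k) \leq (1-\alpha)^{k-k_l}\,V_{\sigma(k_l)}(x_{k_l})$; at each switching instant, condition~\eqref{eqn:cond_b_stab_avg_dwell_time} inflates the value by at most $\mu$. Iterating over the $N_{\sigma(k)}(k_0,k)$ switches in $[k_0,k]$ yields
\begin{equation*}
V_{\sigma(k)}(x_k) \;\leq\; \mu^{N_{\sigma(k)}(k_0,k)}\,(1-\alpha)^{k-k_0}\,V_{\sigma(k_0)}(x_{k_0}),
\end{equation*}
and inserting the average dwell time bound $N_{\sigma(k)}(k_0,k)\leq N_0+(k-k_0)/\tau_\mathrm{a}$ from Definition~\ref{def:avg_dwell_time} gives $V_{\sigma(k)}(x_k) \leq \mu^{N_0}\bigl((1-\alpha)\,\mu^{1/\tau_\mathrm{a}}\bigr)^{k-k_0} V_{\sigma(k_0)}(x_{k_0})$. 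This decays geometrically exactly when $(1-\alpha)\,\mu^{1/\tau_\mathrm{a}}<1$, i.e., when $\tau_\mathrm{a} > -\ln\mu/\ln(1-\alpha)$, which is what the ceiling in~\eqref{eqn:stab_avg_dwell_time} enforces; one then converts exponential decay of $V_{\sigma(k)}$ into exponential decay of $\norm{x_k}$ via uniform upper and lower (e.g., quadratic) bounds on the finitely many $V_i$. That counting-and-compounding step --- trading the per-step contraction $(1-\alpha)$ against the per-switch inflation $\mu$ at the average rate $1/\tau_\mathrm{a}$ --- is the essential idea of the lemma, and it is exactly what your proposal presupposes rather than proves.
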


We will now show that this result can be applied to the system defined in~\eqref{eqn:sys_complete}--\eqref{eqn:pred_inp} and guarantees mean-square stability.
To this end, we rewrite~\eqref{eqn:matr_repr_noise} to include switching,
\begin{align}
\label{eqn:matr_repr_noise_switch}
\underbrace{\begin{pmatrix}
x(k+1)\\\hat{x}(k+1)\\u(k+1)\\ \hat{u}(k+1)
\end{pmatrix}}_{z(k+1)}
&=\underbrace{\begin{pmatrix}
A_{\sigma(k)}&0&B_{\sigma(k)}&0\\
\theta A_{\sigma(k)}&(1-\theta)A_{\sigma(k)}&0&B_{\sigma(k)}\\
0&\phi F_{\sigma(k)}A_{\sigma(k)}&(1-\phi)I&\phi F_{\sigma(k)}B_{\sigma(k)}\\
0&F_{\sigma(k)}A_{\sigma(k)}&0&F_{\sigma(k)}B_{\sigma(k)}
\end{pmatrix}}_{\tilde{A}_{\sigma(k)}(k)}
\underbrace{\begin{pmatrix}
x(k)\\ \hat{x}(k)\\u(k)\\ \hat{u}(k)
\end{pmatrix}}_{z(k)} + \underbrace{\begin{pmatrix}
1 & 0\\
0 & \theta\\
0 & 0\\
0 & 0
\end{pmatrix}}_{\tilde{E}(k)}
\underbrace{
\vphantom{\begin{pmatrix}
1 & 0\\
0 & \theta\\
0 & 0\\
0 & 0
\end{pmatrix}}
\begin{pmatrix}
v(k)\\w(k)
\end{pmatrix}}_{\epsilon(k)}.
\end{align}
According to Definitions~\ref{def:MSS} and~\ref{def:MSS_noise_free}, for mean-square stability we care about the evolution of the state correlation matrix.
The state correlation matrix behaves deterministically, as can be seen from~\eqref{eqn:state_corr_rec_noise}. 
We can thus leverage the result from Lemma~\ref{lem:stability_avg_dwell_time}, which is valid for general, deterministic switched systems.

As Theorem~\ref{thm:MSSourSystem} holds, there exists a monotonically decreasing Lyapunov function for every subsystem (\ie realization of $\tilde{A}_{\sigma(k)}(k)$) (cf.~\cite[p.~132]{Boyd1994}).
For a monotonically decreasing function, we can always derive an $\alpha$ that fulfills~\eqref{eqn:cond_a_stab_avg_dwell_time}.
Moreover, as we have a finite number of modes, we can find a $\mu$ fulfilling~\eqref{eqn:cond_b_stab_avg_dwell_time} for all possible switching combinations.
Then we can, by Lemma~\ref{lem:stability_avg_dwell_time}, prove that $\lim_{k\to\infty}M(k)=0$ for the noise-free system, which by Theorem~\ref{thm:MSSourSysNoise} implies stability of the system~\eqref{eqn:matr_repr_noise_switch}. 

\subsection{Controller Implementation}
\label{sec:controller_implementation}

The implementation of the controllers follows the design outlined in Sections~\ref{sec:ctrlDesign} and \ref{sec:sync}.
The system matrices $A$ and $B$ of the cart-pole system are provided by the manufacturer in~\cite{Quanser2012}.
For the stabilization experiments, we design a nominal controller for an update interval of $\Tupdate=\SI{40}{\milli\second}$ via pole placement, and we choose $F$ such that we get closed-loop eigenvalues at \num{0.8}, \num{0.85}, and \num{0.9} (twice).
In experiments with update intervals different from \SI{40}{\milli\second}, we adjust the controller to achieve similar closed-loop behavior.
For the synchronization experiments, we choose $Q_i$ in~\eqref{eqn:cost} for all pendulums as suggested by the manufacturer~\cite{Quanser2012} and set $R_i = 0.1$.
As we here care to synchronize the cart positions, we set the first diagonal entry of $Q_\text{sync}$ to \num{5} and all others to 0.

To derive more accurate estimates of the velocities, filtering can be done at higher update intervals than communication occurs.
For the experiments presented in this paper, estimation and filtering occurs at intervals between \SI{10}{\milli\second} and \SI{20}{\milli\second}, depending on the experiment.


\begin{figure}[t]
\centering
\includegraphics[width=0.7\linewidth]{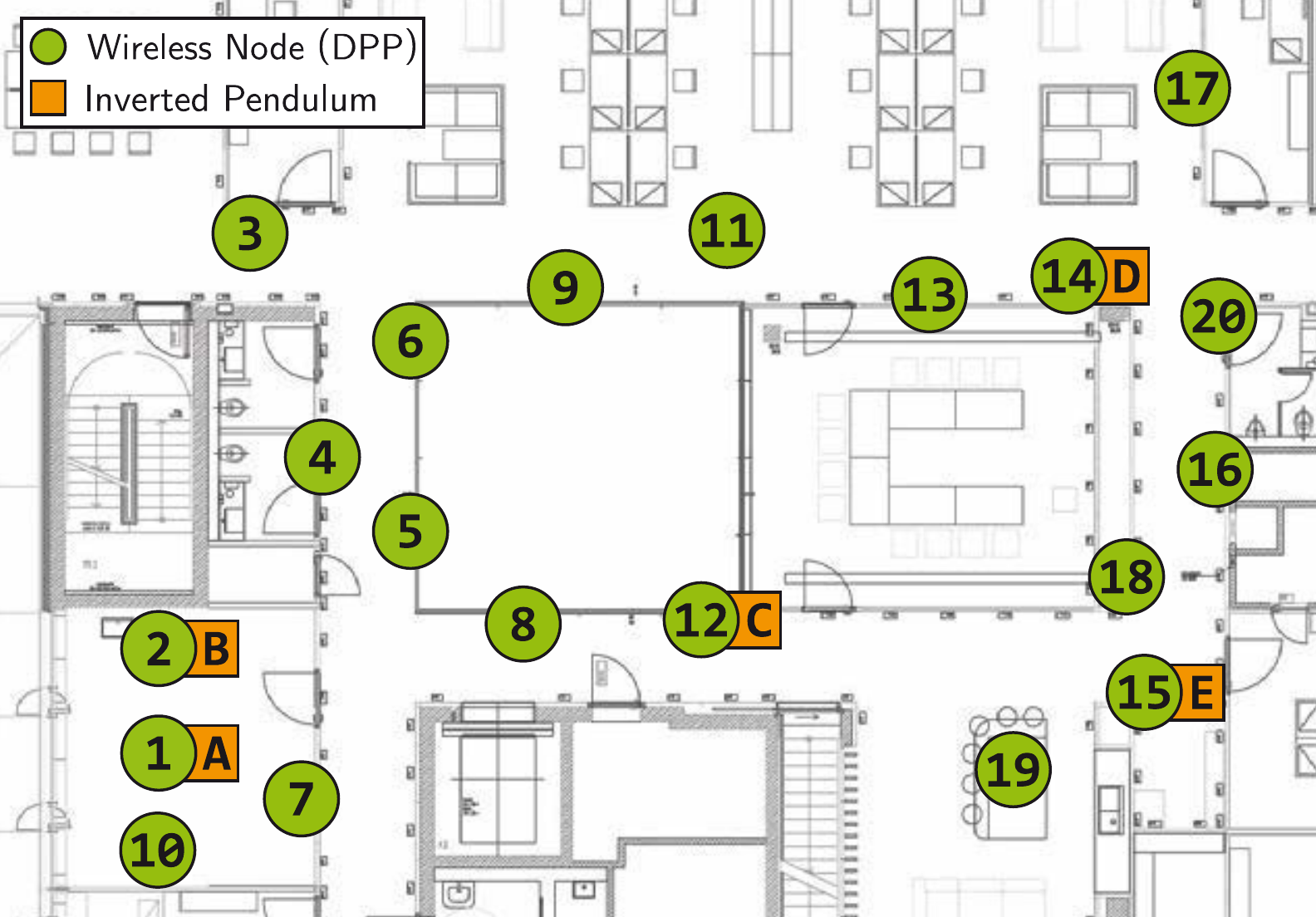}
\caption{Cyber-physical testbed consisting of 20 \dpp nodes that form a three-hop wireless network and five cart-pole systems (two real ones attached to nodes 1 and 2, and three simulated ones at nodes 12, 14, and 15).}
\label{fig:floorplan_mode_changes}
\end{figure}

\end{document}